\documentclass[11pt]{article}
\RequirePackage{algorithm,algorithmic,graphicx,amssymb,epsf,epic,url,complexity}
\RequirePackage{fullpage}
\usepackage{enumitem}
\usepackage{authblk}
\usepackage{mdframed}
\usepackage{microtype, complexity}
\usepackage[usenames,dvipsnames]{xcolor}
\usepackage{amsthm}
\usepackage{amsmath}
\usepackage{upref}
\usepackage{amsfonts}
\usepackage{graphicx}

\usepackage{latexsym}
\usepackage{amssymb}
\usepackage{amscd}
\usepackage{mdframed}
\usepackage[all]{xy}

\usepackage{xcolor}
\usepackage{amsthm}
\usepackage{framed}

\usepackage{mathrsfs}
\usepackage{amsfonts}
\usepackage{epsfig}
\usepackage{epstopdf}
\usepackage{tcolorbox}
\usepackage{tikz}
\usepackage[hang, small,labelfont=bf,up,textfont=it,up]{caption} 

\usepackage[backref=page, pagebackref=true]{hyperref}
\usepackage[capitalize]{cleveref} 

\newcommand {\ignore} [1] {}
\def\eps{\varepsilon}



\colorlet{shadecolor}{gray!15}

\newtheorem{thm}{Theorem}

\newenvironment{theo}
  {\begin{shaded}\begin{thm}}
  {\end{thm}\end{shaded}}
  

\theoremstyle{definition}

\theoremstyle{remark}

\def\polylog{\operatorname{polylog}}




%
%
%
%
%

%
%

\def\denseformat{
\setlength{\textheight}{9.5in}
\setlength{\textwidth}{6.9in}
\setlength{\evensidemargin}{-0.3in}
\setlength{\oddsidemargin}{-0.3in}
\setlength{\headsep}{10pt}
\setlength{\topmargin}{-0.44in}
\setlength{\columnsep}{0.375in}
\setlength{\itemsep}{0pt}
}





%
%
%
%
%
%
\newtheorem{theorem}{Theorem}[section]

\newtheorem{lemma}[theorem]{Lemma}

\newtheorem{corollary}[theorem]{Corollary}



%
%

\def\boldhead#1:{\par\vskip 7pt\noindent{\bf #1:}\hskip 10pt}
\def\ithead#1:{\par\vskip 7pt\noindent{\it #1:}\hskip 10pt}

\def\inline#1:{\par\vskip 7pt\noindent{\bf #1:}\hskip 10pt}
\def\midinline#1:{\par\noindent{\bf #1:}\hskip 10pt}
\def\dnsinline#1:{\par\vskip -7pt\noindent{\bf #1:}\hskip 10pt}
\def\ddnsinline#1:{\newline{\bf #1:}\hskip 10pt}
\def\largeinline#1:{\par\vskip 7pt\noindent{\large\bf #1:}\hskip 10pt}
%

\long\def\commhide #1\commhideend{}
\long\def\commtim #1\commendt{#1}
\long\def\commb #1\commbend{}
%
%
\long\def\commedit #1\commeditend{} 

\long\def\commB #1\commBend{}       

\long\def\commex #1\commexend{}     

\long\def\commsiena #1\commsienaend{}  

\long\def\commBI #1\commBIend{}  


\long\def\CProof #1\CQED{}
\def\qed{\mbox{}\hfill $\Box$\\}
\def\blackslug{\hbox{\hskip 1pt \vrule width 4pt height 8pt
    depth 1.5pt \hskip 1pt}}
\def\QED{\quad\blackslug\lower 8.5pt\null\par}


\long\def\PPP#1{\noindent{\bf Proof:}{ #1}{\quad\blackslug\lower 8.5pt\null}}

\long\def\denspar #1\densend
{#1}

%
%


%
%

\setlength{\marginparwidth}{1in}
\setlength{\marginparpush}{-5ex}
\newif\ifnotesw\noteswtrue
   {\ifnotesw\marginpar[\hfill\(\top\)]{\(\top\)}\fi}%
   {\ifnotesw\marginpar[\hfill\(\bot\)]{\(\bot\)}\fi}

\newcommand{\mnote}[1]%
    {\ifnotesw\marginpar%
        [{\scriptsize\it\begin{minipage}[t]{\marginparwidth}
        \raggedleft#1%
                        \end{minipage}}]%
        {\scriptsize\it\begin{minipage}[t]{\marginparwidth}
        \raggedright#1%
                        \end{minipage}}%
    \fi}

%
%






%
%
%
%
\def\MathF{\hbox{\rm I\kern-2pt F}}
\def\MathP{\hbox{\rm I\kern-2pt P}}
\def\MathR{\hbox{\rm I\kern-2pt R}}
\def\MathZ{\hbox{\sf Z\kern-4pt Z}}
\def\MathN{\hbox{\rm I\kern-2pt I\kern-3.1pt N}}
\def\MathC{\hbox{\rm \kern0.7pt\raise0.8pt\hbox{\footnotesize I}
\kern-4.2pt C}}
\def\MathQ{\hbox{\rm I\kern-6pt Q}}




%
%


\newsavebox{\ttop}\newsavebox{\bbot}

%


\def\eps{\epsilon}

\def\polylog{\mbox{polylog}}

\def\nin{{~\not \in~}}

%
%




\denseformat
\newtheorem{mdresult}[theorem]{Theorem}

\newtheorem{question}{Question}
\newtheorem{mdresult2}[question]{Question}
\newenvironment{Question}{\begin{mdframed}[backgroundcolor=lightgray!40,topline=false,rightline=false,leftline=false,bottomline=false,innertopmargin=2pt]\begin{mdresult2}}{\end{mdresult2}\end{mdframed}}

\begin{document}

\title{Maintaining an EDCS in General Graphs: \\Simpler, Density-Sensitive and with Worst-Case Time Bounds}
\author{Fabrizio Grandoni}
\affil{IDSIA, USI-SUPSI}
\author{Chris Schwiegelshohn} 
\affil{Aarhus University}
\author{Shay Solomon\thanks{Research was partially supported by Israel Science Foundation (ISF) grant 1991/19,
and by a grant from the United States-Israel Binational Science Foundation (BSF) and the United States National Science Foundation (NSF).
}}
\author{Amitai Uzrad$^*$}
\affil{Tel Aviv University}

\date{\empty}

\begin{titlepage}
\def\thepage{}
\maketitle

\begin{abstract}
In their breakthrough ICALP'15 paper, Bernstein and Stein presented an algorithm for maintaining a $(3/2+\eps)$-approximate maximum matching in fully dynamic {\em bipartite} graphs with a {\em worst-case} update time of $O_\eps(m^{1/4})$; we use the $O_\eps$ notation to suppress the $\eps$-dependence.
Their main technical contribution was in presenting a new type of bounded-degree subgraph, which they named an {\em edge degree constrained subgraph (EDCS)}, which contains a large matching --- of size that is smaller than the maximum matching size of the entire graph by at most a factor of $3/2+\eps$.
They demonstrate that the EDCS can be maintained with a worst-case update time of $O_\eps(m^{1/4})$, and their main result follows as a direct corollary.
In their followup SODA'16 paper, Bernstein and Stein generalized their result for general graphs, achieving the same update time of $O_\eps(m^{1/4})$, albeit with an amortized rather than worst-case bound.
To date, the best {\em deterministic} worst-case update time bound for {\em any} better-than-2 approximate matching is $O(\sqrt{m})$ [Neiman and Solomon, STOC'13], [Gupta and Peng, FOCS'13]; allowing randomization (against an oblivious adversary) one can achieve a much better (still polynomial) update time for approximation slightly below 2 [Behnezhad, {\L}acki and Mirrokni, SODA'20]. 

In this work we\footnote{\em quasi nanos, gigantium humeris insidentes} simplify the approach of Bernstein and Stein for bipartite graphs, which allows us to generalize it for general graphs while maintaining the same bound of $O_\eps(m^{1/4})$ on the {\em worst-case} update time. Moreover, our approach is {\em density-sensitive}: If the {\em arboricity} of the dynamic graph is bounded by $\alpha$ at all times, then the worst-case update time of the algorithm is $O_\eps(\sqrt{\alpha})$.

\medskip
\noindent
	\textbf{Recent related work:} Independently and concurrently to our work, Roghani, Saberi and Wajc [arXiv'21] obtained two dynamic algorithms for
	approximate maximum matching with worst-case update time bounds. 
	Their first algorithm achieves approximation factor slightly better than 2 within $O(\sqrt{n} \cdot m^{1/8})$ update time,
	and their second algorithm achieves approximation factor $(2+\eps)$ for any $\eps > 0$ within $O_\eps(\sqrt{n})$ update time. 
	In terms of techniques, the two works are  entirely disjoint.

\end{abstract}
\end{titlepage}

\pagenumbering {arabic}
\section{Introduction}
Dynamic matching algorithms have been subject to   extensive research attention for more than a decade, starting with the pioneering work
of Onak and Rubinfeld \cite{OnakR10}. We do not aim to cover here the entire literature, but rather to briefly survey most of the state-of-the-art results that concern approximation factor 2 or less --- indeed the 2-approximation barrier is a central one in the context of graph matching. 
One may try to optimize the {\em amortized} (i.e., average) update time of an algorithm or its {\em worst-case}
(i.e., maximum) update time, over a worst-case sequence of graphs,
and we will put special emphasis on this distinction.  
Indeed, there is a strong separation between the state-of-the-art amortized versus worst-case time bounds for dynamic matching algorithms; a similar separation exists for various other dynamic graph problems, such as spanning tree, minimum spanning tree and two-edge connectivity.

Before starting the background survey, we summarize the key question around which our work revolves:
\begin{Question} \label{q1}
Is there a {\em deterministic} algorithm for maintaining a better-than-2 (approximate) {\em maximum cardinality matching (MCM)} with a low worst-case update time?
Further, is it possible to push the approximation factor well below 2 (possibly using a randomized algorithm)?
\end{Question}


\paragraph{Maximal matching.~}
In their seminal paper, Baswana et al.\ \cite{BGS11} showed that a maximal matching, which provides a 2-MCM, can be maintained with an {\em amortized}  update time of $O(\log n)$.
A constant amortized update time was given by Solomon \cite{Sol16}, and a {\em worst-case} $\polylog(n)$ update time was given by Bernstein et al.\ \cite{BFH19}. 
All these algorithms are randomized, and as common in the area of dynamic graph algorithms, they operate under the {\em oblivious adversarial model}, in which the adversary (the entity adding/deleting edges to/from the graph) may know all the edges in the graph
and their arrival order, as well as the algorithm to be used, but is not aware of the random bits used by
the algorithm, and so cannot choose updates adaptively in response to the randomly guided choices of the algorithm.

Neiman and Solomon \cite{NS13} gave a deterministic algorithm for maintaining a maximal matching with a worst-case
update time of $O(\sqrt{m})$, where $m$ is the (dynamically changing) number of edges in the graph.
This $\sqrt{m}$ bound remains the state-of-the-art for dynamic maximal matching,
even allowing amortization and even allowing randomization under a non-oblivious adversary.

\paragraph{Crossing the 2-approximation barrier.~}
Neiman and Solomon \cite{NS13} were the first to cross the 2-approximation barrier: $3/2$-MCM   with a worst-case deterministic update time of $O(\sqrt{m})$. Soon afterwards Gupta and Peng \cite{GP13} showed that a $(1+\eps)$-MCM can be maintained with a worst-case deterministic update time of $O(\sqrt{m}/\eps^2)$, for any $0 < \eps < 1/2$; moreoever, if the maximum degree is always upper bounded by $\Delta$,
then a worst-case update time of $O(\Delta / \eps^2)$ can be achieved \cite{GP13}.

Bernstein and Stein \cite{BernsteinS15} showed that a $(3/2+\eps)$-MCM can be maintained in {\em bipartite} graphs with a {\em worst-case} deterministic update time of $O(m^{1/4} \eps^{-2.5})$. 
Their main technical contribution was in presenting a new type of bounded-degree subgraph,  an {\em edge degree constrained subgraph (EDCS)}, which contains a large matching --- of size that is smaller than the maximum matching size of the entire graph by at most a factor of $3/2+\eps$.
They demonstrate that the EDCS can be maintained with a worst-case update time of $O(m^{1/4} \eps^{-2.5})$ (and also with a worst-case {\em recourse}, which is the number of edge changes per update step, of $O(1/\eps)$), and their main result follows by running the bounded degree version of Gupta-Peng algorithm \cite{GP13} on top of the EDCS.
In their followup paper \cite{BernsteinS16}, they generalized their result for general graphs, using an inherently different
algorithm for the EDCS maintenance; they achieved the same update time of $O_\eps(m^{1/4})$, albeit with an amortized rather than worst-case bound.


Behnezhad et al.\ \cite{BLM20} showed that a (slightly better-than-2)-MCM can be maintained with arbitrarily small polynomial worst-case update time, that is, the update time of their algorithm is $O(\Delta^\eps + \polylog(n))$ and their approximation  is $2- \Omega_\eps(1)$, where $\Omega_\eps(1)$ is a tiny constant that shrinks rapidly as $\eps$ reduces.
This result is achieved via a randomized algorithm that assumes an oblivious adversary; for bipartite graphs, similar (but somewhat weaker) results can be achieved
without making the oblivious adversary assumption: Via a randomized algorithm against an adaptive adversary \cite{Wajc20,BhattacharyaHN16} and via a deterministic algorithm but with an amortized update time bound \cite{BK21,BhattacharyaHN16}.

\paragraph{Bounded arboricity graphs.~}
The {\em arboricity} of an $m$-edge graph is the minimum number of forests into which it can be decomposed, and it ranges from 1 to $\sqrt{m}$.
The family of bounded arboricity graphs can be viewed as the family of ``sparse everywhere'' graphs, containing bounded-degree graphs, all minor-closed graph classes (e.g., planar graphs and graphs of bounded treewidth), and randomly generated preferential attachment graphs. Moreover, many natural and real world graphs, such as the world wide web graph, social networks and transaction networks, are believed to have bounded arboricity. Thus, it is only natural to try and come up with better dynamic matching algorithms for bounded arboricity graphs.

Bernstein and Stein \cite{BernsteinS15} showed that by using a {\em weighted} EDCS, 
one can a maintain a $(1+\eps)$-MCM in {\em bipartite} graphs of arboricity at most $\alpha$ with a {\em worst-case} update time of 
$O(\alpha(\alpha + \log n) + \eps^{-4}(\alpha + \log n) + \eps^{-6})$.
In their followup paper \cite{BernsteinS16}, Bernstein and Stein generalized their result for general graphs using an ordinary EDCS, achieving a similar update time of $O(\alpha(\alpha + \log n + \eps^{-2}) + \eps^{-6})$, albeit with an amortized rather than worst-case bound and with approximation factor $3/2+\eps$ rather than $1+\eps$.
Peleg and Solomon \cite{PS16} showed that a $(1+\eps)$-MCM can be maintained in graphs of arboricity at most $\alpha$ with a {\em worst-case} update time of $O(\alpha / \eps^2)$; since $\alpha$ ranges between 1 and $\sqrt{m}$, this result generalizes the result of \cite{GP13} for general graphs with update time $O(\sqrt{m}/\eps^2)$.

\subsection{Our contribution}
In this work we simplify the approach of Bernstein and Stein \cite{BernsteinS15} for bipartite graphs, which allows us to generalize it for general graphs while maintaining the same upper bound  on the {\em worst-case} update time. 

\begin{theo} \label{thm1}
For any dynamic graph $G$ subject to edge updates and for any $\eps < 1/2$, one can maintain a $(3/2+\eps)$-MCM for $G$
with a deterministic worst-case update time of  $O(m^{1/4} \eps^{-2.5} + \eps^{-6})$.
\end{theo}
Note that \Cref{thm1} resolves \Cref{q1} in the affirmative.
In particular, it provides the first {\em deterministic} algorithm achieving approximation better-than-2 with a worst-case update time that is strictly sublinear in the number of vertices for general graphs; it also provides the first (possibly randomized) algorithm achieving approximation well below 2 with such a worst-case update time bound.

Importantly, our approach is {\em density-sensitive}: If the {\em arboricity} of the dynamic graph is bounded by $\alpha$ at all times, then the worst-case update time of the algorithm reduces    to $O(\sqrt{\alpha} \eps^{-2.5} + \eps^{-6})$.
\begin{theo} \label{thm2}
Fix any parameter $\alpha \ge 1$.
For any dynamic graph $G$ subject to edge updates whose arboricity is upper bounded by $\alpha$ at all times and for any $\eps < 1/2$, one can maintain a $(3/2+\eps)$-MCM for $G$
with a deterministic worst-case update time of  $O(\sqrt{\alpha} \eps^{-2.5} + \eps^{-6})$.
\end{theo}
We note that the previous state-of-the-art arboricity-dependent update time, due to Peleg and Solomon \cite{PS16}, is $O(\alpha/\eps)$ --- which is quadratically higher than the update time provided by \Cref{thm2}, ignoring the $\eps$-dependence.
Although the approximation guarantee provided by the matching of \cite{PS16} is $1+\eps$ rather than $3/2+\eps$, no better arboricity-dependent update time bounds were known prior to this work, even for much larger approximation guarantee and even for amortized bounds.

\subsection{Technical Overview}
Our argument for maintaining an EDCS efficiently consists of the following three steps.

Suppose that the maximum degree in the dynamic graph never exceeds $\Delta$.
In the first step, we demonstrate that an EDCS can be maintained with a worst-case update time of $O_\eps(\Delta)$;
we use the $O_\eps$ notation to suppress the $\eps$-dependence.
Our key observation for this step is that the argument used by \cite{BernsteinS15} for the case of {\em bipartite graphs of bounded arboricity} can be employed --- with minimal changes --- for {\em general graphs}, to restore a valid EDCS following any edge update by 
computing and augmenting a short (possibly non-simple) alternating path in the EDCS.
The changes that we introduce only simplify the argument, as we only trim parts from it: (1) While the argument of \cite{BernsteinS15} 
proves that the alternating path is simple, which is true only in bipartite graphs, we
make do without relying on simple paths, which is crucial for coping with general graphs. 
(2) We don't maintain a dynamic edge orientation.  

In the second step, we reduce the update time from $O_\eps(\Delta)$ to $O_\eps(\sqrt{\Delta})$.
Here too we use an idea from \cite{BernsteinS15}, of scanning just a small subset of neighbors following a change of degree (in the EDCS), which then triggers a discrepency between the true degrees and their estimations by their neighbors. 
The main difference to \cite{BernsteinS15} is that we do not try to preserve an EDCS with the same parameters as those achieved in step 1, but rather allow them to degrade by a small (constant) factor. This has no effect whatsoever on any of the guarantees, yet it simplifies the algorithm and its analysis quite a bit.

Finally,  we demonstrate that $\Delta$ can be substituted with either $\Theta(\sqrt{m} /\eps)$ or $\Theta(\alpha / \eps)$, where $\alpha$ is a fixed upper bound on the arboricity of the dynamic graph. This follows easily from the following result:
\begin{theorem} \cite{Sol18} \label{basicsparse}
Let $G$ be a graph with arboricity $\alpha$.
Suppose that each vertex $v$ in $G$ ``marks'' (up to) $\delta := c(\alpha / \eps)$ arbitrary incident edges, for an appropriate constant $c$.
The graph $G'$ obtained as the union of all edges marked twice (by both endpoints) is a {\em $(1+\eps)$-MCM sparsifier} for $G$,
i.e., $\mu(G) \le (1+\eps)\mu(G')$, where $\mu(H)$ denotes the maximum matching size of any graph $H$.
\end{theorem}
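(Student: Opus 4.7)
The plan is to establish the equivalent statement that for any maximum matching $M$ of $G$, the sparsifier $G'$ contains a matching of size at least $(1-\eps)|M|$, which by rearrangement yields $\mu(G) \leq (1+O(\eps))\mu(G')$ after adjusting $\eps$ by a constant. Partition $M = M_c \sqcup L$, where $M_c := M \cap E(G')$ is already in $G'$ and $L := M \setminus E(G')$ is the set of ``lost'' matching edges. For each $e=(u,v) \in L$, the fact that $e \notin G'$ means at least one endpoint failed to mark $e$; call such an endpoint the \emph{blamed} vertex of $e$. A vertex can fail to mark an incident edge only if its degree exceeds $\delta$, and in that case it must have marked $\delta$ other incident edges. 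Since the edges of $M$ are vertex-disjoint, the blamed vertices form a set $B$ with $|B| = |L|$.

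The core idea is to replace each lost edge by a fresh $G'$-edge incident to its blamed endpoint, avoiding the vertices covered by $M_c$. Formally, I would build an auxiliary bipartite graph $H$ with left side $B$, right side $V(G) \setminus V(M_c)$, and an edge $(b, w)$ whenever $(b, w) \in E(G')$. A matching $N$ in $H$ of size $(1-\eps)|L|$ combines with $M_c$ to give a matching in $G'$ of total size at least $|M_c| + (1-\eps)|L| \geq (1-\eps)|M|$. Thus it suffices to lower bound $\mu(H)$, for which I would invoke the defect form of Hall's / K\"onig's theorem and establish that $|N_H(S)| \geq (1-\eps)|S|$ for every $S \subseteq B$.

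To bound $|N_H(S)|$, fix $S$ and consider the $\delta|S|$ edges marked by vertices of $S$. These partition into: (i) $G'$-edges into $V \setminus V(M_c)$, contributing to $N_H(S)$; (ii) $G'$-edges whose other endpoint lies in $V(M_c)$; and (iii) edges not in $G'$, i.e., marked by $S$ but not by the other endpoint. Type (ii) is controlled by the arboricity bound applied to the subgraph induced by $S \cup V(M_c)$, and a similar arboricity-based count controls type (iii), since each such edge forces its outer endpoint to have degree exceeding $\delta$ and thus to sit in a dense local configuration. Taking $\delta = c\alpha/\eps$ with $c$ a sufficiently large constant drives the total count of types (ii)+(iii) below $\eps \cdot \delta |S|$, forcing $|N_H(S)| \geq (1-\eps)|S|$.

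The hard part is the type (iii) accounting: the ``not marked back'' condition only tells us the far endpoint is itself over-degree, and one must carefully leverage the arboricity of the subgraph witnessing these half-marked edges rather than of $G$ as a whole, to avoid polluting the bound with a $|V(M_c)|$ or $n$ term that need not be tied to $|S|$. I expect this step, together with choosing the constant $c$ tightly so that all three types can be packed into a single $\eps|S|$ slack, to absorb the bulk of the technical effort; everything else is a bookkeeping wrap-up via Hall's defect theorem.
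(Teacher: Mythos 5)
This theorem is not proved in the paper at all --- it is imported as a black box from \cite{Sol18} --- so there is no internal proof to compare against; I can only assess your argument on its own terms, and it has a genuine gap at its core step. Your plan hinges on the Hall-type condition $|N_H(S)|\ge(1-\eps)|S|$ for every $S\subseteq B$, i.e., on replacing all but an $\eps$-fraction \emph{of the lost edges} by $G'$-edges that avoid $V(M_c)$. That is false. Take $\alpha=1$, $\delta=c/\eps$, and the tree consisting of a vertex $b$ adjacent to $v$ and to $x_1,\dots,x_\delta$, where each $x_i$ has a private pendant neighbor $y_i$. Let $b$ (degree $\delta+1$) mark the edges $(b,x_i)$ and not $(b,v)$; every other vertex has degree at most $2$ and marks everything. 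Then $G'$ is the whole graph minus $(b,v)$, $M=\{(b,v)\}\cup\{(x_i,y_i)\}_i$ is a perfect matching, $M_c=\{(x_i,y_i)\}_i$, $L=\{(b,v)\}$, and the single blamed vertex $b$ has \emph{all} of its $G'$-neighbors inside $V(M_c)$, so $N_H(\{b\})=\emptyset$ and your condition fails for $|S|=1$. Indeed no matching of $G'$ of size $|M_c|+(1-\eps)|L|$ exists here ($\mu(G')=\delta$ versus $\mu(G)=\delta+1$); the theorem survives only because the deficiency is small relative to $|M|$, not relative to $|L|$.

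This pinpoints the miscalibration: the slack has to be charged against $\eps\,\mu(G)$ (equivalently against $|M_c|+|L|$), whereas you explicitly set out to avoid any term tied to $|V(M_c)|$. In your own type-(ii) count, arboricity applied to $S\cup V(M_c)$ inevitably produces a term of order $\alpha|V(M_c)|$, which can dominate $\delta|S|$ when $|M_c|\gg|S|$ --- in the example above \emph{all} $\delta$ edges marked by $b$ are type (ii). The correct accounting must therefore be global (deficiency over all of $B$ at most $\eps|M|$, absorbing an $\alpha|M_c|$-type term via $\delta=c\alpha/\eps$), or one must abandon the ``keep $M_c$, replace $L$ vertex-disjointly'' template altogether (e.g., argue via alternating paths or a fractional matching/vertex-cover pairing as in \cite{Sol18}). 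A secondary, fixable issue: your auxiliary graph is not genuinely bipartite, since $B\subseteq V\setminus V(M_c)$ appears on both sides; a matching there can reuse a vertex once per side, and extracting a true matching from the resulting degree-$2$ subgraph naively costs a factor $2$, which a $(1+\eps)$ guarantee cannot afford.
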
 
It is straightforward to dynamically maintain the graph $G'$ defined by \Cref{basicsparse} within constant worst-case update time.
The only subtlety arises in case we would like to substitute $\Delta$ with $\Theta(\sqrt{m} /\eps)$, {\em where $m$ dynamically changes over time}; this issue can be resolved quite easily, as we show in \Cref{partIII}.
This sparsification step is very simple, whereas if instead we   resort to dynamic edge orientations, as done by \cite{BernsteinS15}, life would  become   more complicated; the following was written in Section 5 of \cite{BernsteinS15}, and is mostly attributed to the interplay between dynamic edge orientations and the discrepency between the true and estimated degrees in the EDCS: ``The details, however, are quite
involved, especially since we need a worst-case update time.''

From the above discussion, it is clear that each of the steps above simplify a respective ingredient from the argument of \cite{BernsteinS15}.
Another source of simplification stems from the fact that our argument is broken into three rather separate steps, which are then combined together in a rather natural way, 
which stands in contrast to the argument of \cite{BernsteinS15}, in which all the ingredients are intertwined together.
Interestingly, our improved bounds over \cite{BernsteinS15,BernsteinS16} are achieved as a direct by-product of this simplification.


\paragraph{Related work.~}
The EDCS was introduced in \cite{BernsteinS15} for dynamic matching algorithms, but it turned out to be a very useful graph structure also outside the area of dynamic graph algorithms. In particular, EDCSs are especially useful in space and communication constrained settings, see e.g.~\cite{AsB19,ABBMS19,Ber20}. 

For any fixed $\eps > 0$, a $(1+\eps)$-MCM can be maintained in constant (respectively, $\polylog(n)$) amortized update time
in incremental (resp., decremental) graphs \cite{FLSSS19} (resp., \cite{BK21}).
Importantly, these results only concern amortized bounds, and no better worst-case time bounds for incremental or decremental graphs
are known than the aforementioned results.

The improvement to the update time in bounded arboricity graphs versus general graphs becomes less  significant as the arboricity grows.
Milenkovi\'{c} and Solomon \cite{MS20} showed that a $(1+\eps)$-MCM can be maintained with a worst-case update time of 
$O(\frac{\beta}{\eps^3}\log\frac{1}{\eps})$, where $\beta = \beta(G)$ 
is the {\em neighborhood independence number} of the graph $G$, i.e., the size of the largest independent set in the neighborhood of any vertex.
Graphs with bounded neighborhood independence, already for constant $\beta$, constitute a wide family of {\em possibly dense graphs}, including line graphs, unit-disk graphs and graphs of bounded growth. 


\paragraph{Concurrent work.~} Independently and concurrently to our work, Roghani, Saberi and Wajc \cite{RSW21} obtained two dynamic algorithms for
	approximate maximum matching with worst-case update time bounds. 
	Their first algorithm achieves approximation factor slightly better than 2 within $O(\sqrt{n} \cdot m^{1/8})$ update time,
	and their second algorithm achieves approximation factor $(2+\eps)$ for any $\eps > 0$ within $O_\eps(\sqrt{n})$ update time. 
	In terms of techniques, the two works are  entirely disjoint.  
	
\paragraph{Organization.~}  
In \Cref{prel} we present the basic notation and preliminaries used throughout. \Cref{partI} describes how to achieve an $O_{\eps}(\Delta)$ deterministic worst-case update time, where $\Delta$ is an upper bound on the maximum degree in the dynamic graph. In \Cref{partII} we improve this update time to $O_{\eps}(\sqrt{\Delta})$.
Finally, in~\Cref{partIII}, we combine the algorithm from~\Cref{partII} with a simple sparsification technique to reduce the update time to $O_\eps(m^{1/4})$, or, in case of graphs with arboricity bounded by $\alpha$, to $O_\eps(\sqrt{\alpha)}$.

\section{Preliminaries} \label{prel}

\noindent Let $G = (V,E)$ be an undirected, unweighted graph, where $|V|=n$ and $|E|=m$. Let $H = (V,E_H)$ be a subgraph of $G$, with $E_H \subseteq E$; denote by $d_H(v)$ the {\em degree} of vertex $v$ in the subgraph $H$.
For technical convenience, we shall use a weight function over the edges of $G$, $w: E \rightarrow \mathbb{N}$, 
which assigns a nonnegative integer for each edge of the graph; the weights of edges in $H$ are inherited from their weights in $G$.  
Specifically, the weight of any edge $(u,v) \in E$, denoted by $w(u,v)$, is given by $d_H(u)+d_H(v)$.
Following \cite{BernsteinS15,BernsteinS16}, we say that $H$ is an {\em $(\beta,\beta^-)$-EDCS} for $G$, for any pair $\beta, \beta^-$ of real numbers such that $\beta^- < \beta$, if $H$ is a spanning subgraph of $G$ that satisfies the following two properties: 
 
\begin{enumerate}
	\item[(\textbf{P1})] For each $(u,v) \in H$, $w(u,v) \leq \beta$.
	\item[(\textbf{P2})] For each $(u,v) \notin H$, $w(u,v) \geq \beta^-$.
\end{enumerate}

For a vertex $v$, denote by $N(v)$ its adjacency list.
Denote by $\mu(G)$ the maximum matching size of a graph $G$. We say that a subgraph $G'$ of $G$ is a {\em $(1+\varepsilon)$-approximate matching sparsifier} of $G$ if $\mu(G) \le (1+\eps)\mu(G')$.
For any sequence of edge updates that defines a dynamic graph $G$ and for any sparsifier $G'$ for $G$, we say that an algorithm maintaining $G'$ has a worst-case {\em recourse} (or a worst-case \emph{update ratio}) of $r$ if, {\em for each} edge update in $G$, the number of edge changes made to $G'$ is at most $r$. 

Finally, we rely on the following key property that relates the EDCS to $\mu(G)$.

\begin{lemma}[Lemma 2.5 of~\cite{AsB19}, proved first by~\cite{BernsteinS16,BernsteinS15}] \label{keyl}
Let $G = (V, E)$ be any graph and $\eps < 1/2$ be a parameter. For any parameters $\lambda \leq \frac{\eps}{100}$, $\beta\geq 32\lambda^{-3}$, and $\beta^{-} \geq (1-\lambda)\cdot \beta$, in any subgraph $H := EDCS(G, \beta, \beta^{-})$, it holds that $\mu(G)\leq \left(\frac{3}{2}+\eps\right)\mu(H)$.
\end{lemma}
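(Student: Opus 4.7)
The plan is to establish the equivalent lower bound $\mu(H) \geq \frac{2}{3}(1-O(\lambda))\,\mu(G)$, from which $\mu(G) \leq (3/2+\eps)\mu(H)$ follows for $\lambda \leq \eps/100$. Fix a maximum matching $M^*$ of $G$ and a maximum matching $M$ of $H$. I would first record two immediate consequences of the EDCS axioms that drive everything that follows: by (P1), every vertex $v$ that has an $H$-neighbor satisfies $d_H(v) \leq \beta$ (else any $H$-edge at $v$ would violate the degree-sum cap), and by (P2), every $G$-edge missing from $H$ has endpoints whose $H$-degrees sum to at least $(1-\lambda)\beta$, so endpoints of absent edges are ``nearly saturated'' in $H$.

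The core strategy is to transfer $M^*$ into a large matching in $H$ by a charging/fractional-matching argument. Partition $M^* = M^*_H \sqcup M^*_B$ into ``good'' edges (those in $H$) and ``bad'' edges (those in $G\setminus H$). The good part is already a matching in $H$, contributing $|M^*_H|$ to $\mu(H)$. For the bad part, (P2) guarantees many $H$-edges incident to the endpoints of each bad edge, which I would use to build a half-integral matching $x$ on $H$ of value at least $(1-O(\lambda))|M^*|$ by assigning $x_e=1$ on $M^*_H$ and distributing $x$-mass along $H$-edges in the neighborhoods of bad edges. Applying the general-graph bound $\mu(H) \geq \tfrac{2}{3}\mu_f(H)$ (or, equivalently, using the Tutte--Berge/Edmonds--Gallai structure to rule out too many odd half-integral components) then yields the claim. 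An alternative but equivalent route is to argue directly via shortest augmenting paths: no augmenting path of $M$ in $G$ can lie entirely in $H$ by maximality, so (P2) applies at one of its edges; a double count of $H$-degrees at the unmatched endpoints, capped by $2\beta|M|$ via (P1), bounds the number of short augmenting paths and extracts the $\tfrac{2}{3}$ ratio.

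The main obstacle is ensuring that the fractional assignment truly respects the per-vertex constraint $\sum_{e \ni v} x_e \leq 1$ while summing to $(1-O(\lambda))|M^*|$. The difficulty is that the $H$-edges used to ``cover'' different bad edges can share endpoints: a matched vertex $u$ of high $H$-degree may be asked to absorb contributions from several bad $M^*$-edges in its $H$-neighborhood, and naively this gives only the weaker $\mu(H) \ge \tfrac{1}{2}\mu(G)$ bound. To sharpen to $\tfrac{2}{3}$ one must treat length-three augmenting structures as atomic units, exploiting that \emph{both} unmatched endpoints contribute to the (P2) lower bound while the matched middle edge is counted only once in the (P1) upper bound, producing a two-for-three accounting. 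Managing this carefully while absorbing the $O(\lambda\beta)$ slack terms into a single $(1-O(\lambda))$ multiplicative factor is where the parameter choices enter: the slack $\beta^-=(1-\lambda)\beta$ tolerates $O(\lambda\beta)$ wasted capacity per vertex, and the cubic dependence $\beta \geq 32\lambda^{-3}$ ensures that additive errors from integrality rounding and shared incidences are negligible compared with the principal $\beta$-scale quantities.
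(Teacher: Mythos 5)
First, note that the paper does not prove this lemma at all: it is imported verbatim as Lemma~2.5 of~\cite{AsB19} (first proved in~\cite{BernsteinS15,BernsteinS16}), so your attempt has to be measured against those proofs rather than anything in this paper. Measured that way, your main route contains a step that is provably impossible, not merely unfinished. You propose to build a half-integral fractional matching in $H$ of value $(1-O(\lambda))\,\mu(G)$ and only then pay the general-graph factor $\mu(H)\ge \tfrac{2}{3}\mu_f(H)$. But the $3/2$ loss of an EDCS is not an integrality-gap phenomenon: the bound is known to be tight already for \emph{bipartite} instances satisfying exactly these parameter constraints, and in bipartite graphs fractional and integral maximum matchings coincide. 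In such a tight instance $\mu_f(H)=\mu(H)\approx \tfrac{2}{3}\mu(G)$, so no fractional matching in $H$ of value $(1-O(\lambda))\mu(G)$ exists, and your charging scheme cannot be carried out. The obstacle you yourself identify (shared endpoints reducing the naive bound to $1/2$, to be repaired by a ``two-for-three accounting'' over length-three augmenting structures) is precisely the entire content of the lemma, and it is left as a gesture; the ``alternative route'' via augmenting paths has the same status (it asserts that a degree double-count ``extracts the $2/3$ ratio'' without an argument, and it would in any case have to handle all components of $M\oplus M^*$ with at most one $M$-edge, not only short ones). Finally, the role you assign to $\beta\ge 32\lambda^{-3}$ (absorbing rounding errors) is not where that hypothesis is actually needed.

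For comparison, the cited proofs work quite differently. For bipartite graphs, \cite{AsB19} argues via a deficiency (Hall's theorem) counting argument: if $\mu(H)$ were too small there would be a witness set $A$ with $|N_H(A)|<|A|$, and summing degree bounds from (P1) over $H$-edges against the lower bounds from (P2) on the $G\setminus H$ edges of a maximum matching incident to $A$ yields a contradiction when $\beta^-\ge(1-\lambda)\beta$. For general graphs --- the case relevant here --- \cite{AsB19} fixes a maximum matching $M^*$ of $G$, samples a random bipartition of $V$ that separates the endpoints of every $M^*$-edge, and applies the bipartite statement to the crossing subgraphs of $G$ and $H$; a Chernoff bound shows each vertex retains roughly half of its $H$-degree among crossing edges, so the crossing part of $H$ is still (approximately) an EDCS, and this concentration step is exactly where $\beta\ge 32\lambda^{-3}$ is used. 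The original proof in \cite{BernsteinS16} is a direct structural counting argument over the components of $M\oplus M^*$. Neither resembles the fractional-matching-plus-integrality-gap plan, and to repair your attempt you would need to replace its central claim with one of these genuinely combinatorial (or probabilistic-reduction) arguments.
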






\section{Maintaining an EDCS, Part I: Update time $O_\eps(\Delta)$} \label{partI}
In this section we show that an $(\beta,(1-\eps)\beta)$-EDCS can be maintained dynamically with a worst-case update time of $O(1/\eps + \Delta)$.
This section follows along similar lines as those in Appendix C.1 of \cite{BernsteinS15}.

\noindent To dynamically maintain an EDCS, it is instructive to define the following two types of edges: 

\begin{enumerate}
	\item[\textbf{(P1)}] 	\hypertarget{P1}{}
	A \emph{full} edge $(u,v)$ is in $H$ and satisfies $w(u,v) = \beta$.
	\item[\textbf{(P2)}] 	\hypertarget{P2}{}
	A \emph{deficient} edge $(u,v)$ is not in $H$ and satisfies $w(u,v) = (1-\epsilon)\beta$.
\end{enumerate}

Upon insertion of edge $(u,v)$ in $G$, if $w(u,v) \geq (1-\epsilon)\beta$, then we do not add the edge to $H$, and Properties 
(\hyperlink{P1}{P1}) and (\hyperlink{P2}{P2}) continue to hold. On the other hand, if $w(u,v) < (1-\epsilon)\beta$, we need to add $(u,v)$ to $H$. Doing so will increase $d_H(u)$ and $d_H(v)$ by 1, which may lead to a violation of Property (\hyperlink{P1}{P1}) for other edges in $H$ that are incident to either $u$ or $v$. We will then fix an arbitrary such violating edge (if any) in a similar way, and thus we proceed to finding and augmenting an \emph{alternating path} in $H$ (edges along the path are in $H$ and out of $H$, alternately), which ultimately allows us to add $(u,v)$ to $H$ and still maintain Properties (\hyperlink{P1}{P1}) and (\hyperlink{P2}{P2}) for all vertices. 
Next, we describe this process in detail.


We say that a vertex $x$ is \emph{increase-safe} (respectively, \emph{decrease-safe}) if it has no incident full (resp., deficient) edges.
Returning to adding $(u,v)$ to $H$, let us focus on vertex $v$; we will then treat vertex $u$ analogously. If $v$ is increase-safe, then, by definition, adding $(u,v)$ to $H$ does not violate Property (\hyperlink{P1}{P1}) for any edge incident on $v$.
If $v$ is not increase-safe, then it must have at least one incident full edge incident on it, say $(v,p_1)$. We would like to add $(u,v)$ to $H$ and remove $(v,p_1)$ from $H$, thereby leaving $v$'s degree (in $H$) unchanged. Doing so would decrease $d_H(p_1)$, which we can do only if $p_1$ is decrease-safe. If $p_1$ is decrease safe, then adding $(u,v)$ to $H$ and removing $(v,p_1)$ leaves $v$'s degree unchanged, decreases $p_1$'s degree and reestablishes Properties (\hyperlink{P1}{P1}) and (\hyperlink{P2}{P2}) for all vertices (except possibly $u$). However, if $p_1$ is not decrease-safe, it must have an incident deficient edge, say $(p_1,p_2)$. We can add this edge to $H$ and continue from $p_2$, just like we did from $v$. We can continue in this manner, stopping when we find either an increase-safe or a  decrease-safe vertex. We will prove below that this process terminates, and when it does, all the degrees (in $H$) have returned to their value prior to the edge insertion, except for at most two vertices, which we will need to handle separately. We remark that the alternating path that we augment throughout the process is not necessarily a simple path (a simple path is obtained only in bipartite graphs),
but this fact has no consequences whatsoever on the validity of the argument. An alternating path from $v$ in the case of edge insertion can be seen in Figure 1:

\begin{figure}[!htb]
  \center{\includegraphics[scale=0.65]{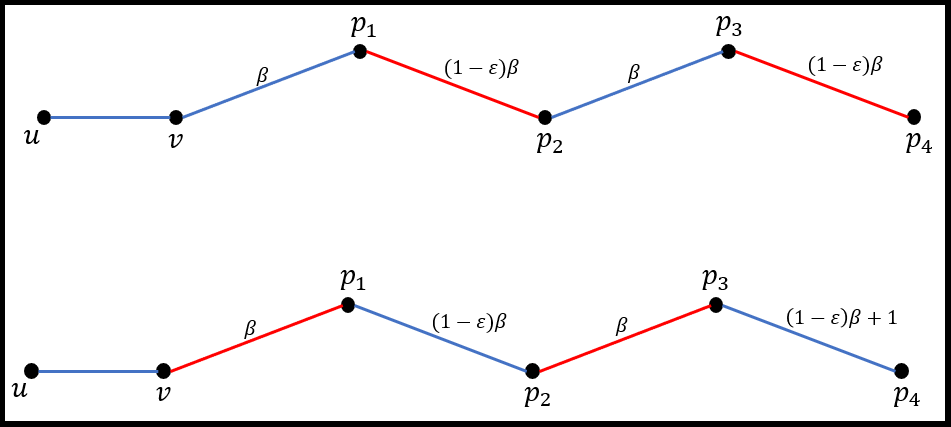}}
  \caption{\label{fig:1}$v$'s alternating path upon insertion of edge $(u,v)$ to $H$. Blue edges designate edges in the $EDCS$, and the red ones are not. On the top we have the path at the beginning of the alternating process, and on the bottom we have the path at the end. Notice that $w(p_3,p_4)$ grows by one since $d_H(p_4)$ grows by one.}
\end{figure}


Upon deletion of edge $(u,v)$ from $G$, we can reestablish Properties (\hyperlink{P1}{P1}) and (\hyperlink{P2}{P2})  via a symmetric process. If $(u,v)$ is not in $H$, then we do not need to change $H$. If $(u,v)$ is in $H$ and both $u$ and $v$ are decrease-safe, we just remove the edge $(u,v)$. Otherwise, we find an alternating path in the same way that we did for an edge insertion. For completeness, we provide the pseudo-codes for  edge insertion and for edge deletion  at the end of this section (where $E_H$ is the set of edges that belong to $H$). Along with the code for insert and delete, two auxiliary codes are added, which are used to find the next full/deficient edge on the path and update the relevant data structures.


Observe that any alternating path that is found and augmented following an edge update is a path that alternates between full and deficient edges. The following lemma and its proof are similar (but not identical) to Lemmas 7 and 10 from \cite{BernsteinS15} and their respective proofs. 
\begin{lemma} \label{mainl}
For any path $P$ of alternating full and deficient edges, we have $|P| \le \frac{2}{\epsilon}$.
\end{lemma}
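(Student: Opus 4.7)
The plan is to exploit the fact that full and deficient edges have weights differing by exactly $\epsilon\beta$, so traversing them in alternation forces a systematic drift in the degrees $d_H(\cdot)$ of the vertices along $P$. Since these degrees are confined to the range $[0,\beta]$, this drift cannot continue indefinitely, which bounds $|P|$.

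First, I would label the vertices of $P$ as $v_0, v_1, \ldots, v_k$ with $|P| = k$, and assume without loss of generality that the first edge $(v_0, v_1)$ is full (otherwise reverse $P$ and swap the roles of full and deficient). For each $j$ with $v_{2j+2}$ on $P$, the full edge $(v_{2j}, v_{2j+1})$ and the subsequent deficient edge $(v_{2j+1}, v_{2j+2})$ yield
\begin{equation*}
d_H(v_{2j}) + d_H(v_{2j+1}) = \beta, \qquad d_H(v_{2j+1}) + d_H(v_{2j+2}) = (1-\epsilon)\beta.
\end{equation*}
Subtracting and telescoping shows that the even-indexed degrees form a strictly decreasing arithmetic progression with common difference $\epsilon\beta$, namely $d_H(v_{2j}) = d_H(v_0) - j\epsilon\beta$. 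Combining this with the elementary bounds $d_H(v_{2j}) \ge 0$ and $d_H(v_0) \le \beta$ (the latter from $d_H(v_0) + d_H(v_1) = \beta$) forces $j \le 1/\epsilon$, so the largest even index along $P$ is at most $2/\epsilon$.

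This immediately gives $k \le 2/\epsilon$ when $k$ is even. For odd $k$ I would apply the same argument to the even-length prefix $v_0, \ldots, v_{k-1}$ and invoke the slightly sharper bound $d_H(v_0) \le \beta - 1$---which follows from $d_H(v_1) \ge 1$, since $(v_0, v_1) \in H$---to rule out the borderline case $k-1 = 2/\epsilon$ and obtain the same bound $k \le 2/\epsilon$.

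The main (and really only) subtlety I anticipate is the parity bookkeeping described above; otherwise the argument is entirely elementary. What I want to stress is that the proof depends solely on the weight identities of the traversed edges and nowhere uses that $P$ is simple. This is precisely the observation that makes the simplification over~\cite{BernsteinS15} go through: in general graphs the alternating paths produced by the fix-up procedure may revisit vertices, and any proof that implicitly assumed simplicity (as in the bipartite analysis of \cite{BernsteinS15}) would break.
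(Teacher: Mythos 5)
Your proof is, at its core, the same telescoping argument as the paper's: use the exact weight identities of full and deficient edges to show that the even-indexed degrees drop by $\epsilon\beta$ per step, then combine this with $0 \le d_H(\cdot) \le \beta$ to bound the number of steps. The one place you are too quick is the opening WLOG: reversing $P$ does not reduce the deficient-start case to the full-start case when $P$ has an odd number of edges (then both end edges are deficient), and ``swapping the roles of full and deficient'' is not a literal symmetry of the definitions, since full edges lie in $H$ with weight exactly $\beta$ while deficient edges lie outside $H$ with weight exactly $(1-\epsilon)\beta$. The deficient-start case does go through, but by a parallel rather than identical computation --- either track the odd-indexed vertices starting from $v_1$, as the paper does, or observe that the even-indexed degrees now \emph{increase} by $\epsilon\beta$ per step and are capped above by $\beta$ (via property (P1)) instead of below by $0$ --- so this is a one-line repair, not a real obstruction. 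Your closing remark that nothing in the argument uses simplicity of $P$ is exactly the point the paper emphasizes, and the boundary arithmetic in your odd-$k$ case carries the same harmless additive-one slack as the paper's own computation; the lemma is only ever invoked as an $O(1/\epsilon)$ bound.
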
 
\begin{proof}
Consider first the case that $P=(p_0,p_1,...,p_k)$ starts with a full edge. Let $d=d_H(p_0)$.
Since $(p_0,p_1)$ is full, $d_H(p_1) = \beta-d$. 
Since $(p_1,p_2)$ is deficient, $d_H(p_2)=(1-\epsilon)\beta-d_H(p_1)=d-\epsilon\beta$. Continuing, we get that $d_H(p_3)=\beta-d+\epsilon\beta$, $d_H(p_4)=d-2\epsilon\beta$, and in general we inductively get $d_H(p_{2i})=d-i\epsilon\beta$, for any $0 \le i \le k/2$. 
Since each vertex in $P$ has an incident full edge, all vertices have a positive degree, and thus $d_H(p_{2i}) = d-i\epsilon\beta >0$.
Since $d \leq \beta$, it follows that $|P| \le \frac{2}{\epsilon} -1$. If $P$ starts with a deficient edge, let $d=d_H(p_1)$.
Since $(p_1,p_2)$ is full, $d_H(p_2) = \beta-d$. 
Since $(p_2,p_3)$ is deficient, $d_H(p_3)=(1-\epsilon)\beta-d_H(p_2)=d-\epsilon\beta$. Continuing, we get that $d_H(p_4)=\beta-d+\epsilon\beta$, $d_H(p_5)=d-2\epsilon\beta$, and in general we inductively get $d_H(p_{2i+1})=d-i\epsilon\beta$, for any $0 \le i < k/2$. 
Since all vertices have a positive degree, we have $d_H(p_{2i+1}) = d-i\epsilon\beta >0$.
Since $d \leq \beta$, it follows that $|P| \le \frac{2}{\epsilon}$.
\qed
\end{proof}

After finding and augmenting an alternating path starting at $v$ as described above, we repeat the same update procedure starting at $u$ (we execute this update procedure sequentially).

\noindent {\bf Remark.}
When $(u,v)$ is inserted to $H$ --- although no specific changes are needed --- it is instructive to consider the case where the alternating path starting at $v$ intersects $u$. (Recall that we first handle $v$ and only later $u$.) If $v$'s path reaches $u$ with a full edge (which was the only possibility in the bipartite case), then we would be done since the degree in $H$ of each vertex would be the same as it was prior to the edge insertion, which 
reestablishes Properties (\hyperlink{P1}{P1}) and (\hyperlink{P2}{P2}) of an EDCS. (When we later handle $u$, nothing else will be done.)
If $v$'s path reaches $u$ with a deficient edge, then we just continue the path as we normally would --- the fact that $u$'s degree in $H$ is "momentarily" raised by 2 does not affect the correctness of the algorithm and its analysis; in such a case we still need to handle $u$ in the end, and the fact that the alternating path reached it does not change that. 
Symmetrically, when $(u,v)$ is deleted from $H$, the same claim holds as for insertion, only that here if $v$'s path reaches $u$ with a deficient edge then we would be done, and if $v$'s path reaches $u$ with a full edge, then we just continue the path  normally. 


We have thus shown the following:

\begin{lemma}
Following each edge update, we can reestablish Properties (\hyperlink{P1}{P1}) and (\hyperlink{P2}{P2})  using at most $\frac{4}{\epsilon}$ insertions and deletions to and from $H$.
\end{lemma}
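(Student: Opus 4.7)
The plan is to argue directly from \Cref{mainl} by a simple accounting of the edge flips induced by the restoration procedure. After an edge update on $G$, the procedure performs at most two alternating-path augmentations on $H$: one starting at $v$, and (once $v$'s side is stabilized) one starting at $u$. Each augmentation traverses a path whose edges alternate between full (in $H$) and deficient (not in $H$), and flips the membership of every edge along that path. Consequently, the cost in insertions/deletions on $H$ contributed by a single augmentation equals the length of the corresponding alternating path.

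Applying \Cref{mainl} bounds each such path by at most $2/\epsilon$ edges, so each of the two augmentations contributes at most $2/\epsilon$ operations on $H$. Summing the two endpoints gives a total of at most $4/\epsilon$ operations per update on $G$. The flip of $(u,v)$ itself --- which is the ``triggering'' change to $H$ in the insertion case, or the one that precipitates the procedure in the deletion case --- is subsumed by the first augmentation: it is precisely the first flip at the endpoint that is not increase-safe (respectively, not decrease-safe), and in the degenerate cases where an endpoint is already safe the associated path has length zero, so no additional operation is incurred. It remains to verify that each augmentation actually leaves Properties (\hyperlink{P1}{P1}) and (\hyperlink{P2}{P2}) intact at every internal vertex of the path: this follows because an internal vertex on an alternating path loses one incident edge and gains the next one, so its $H$-degree --- and hence the weight of every edge incident to it --- is unchanged; only the terminating endpoint's degree changes, and since the procedure halts only at an increase-safe or decrease-safe vertex, this single degree change does not create any violation.

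The main obstacle in this proof is not the counting itself but the soundness of the path procedure, namely that one can always continue the alternating walk until a safe vertex is reached and that the walk has bounded length. Both of these are precisely what \Cref{mainl} delivers: the length bound of $2/\epsilon$ simultaneously certifies termination and yields the desired $4/\epsilon$ accounting. With that in place, the present lemma follows as a straightforward corollary of the procedure's construction.
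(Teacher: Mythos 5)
Your proposal is correct and follows essentially the same route as the paper: the lemma is obtained by combining the description of the restoration procedure (two alternating-path augmentations, one from $v$ and one from $u$, each flipping membership of its edges in $H$ and terminating at an increase-safe or decrease-safe vertex) with the $2/\epsilon$ path-length bound of \Cref{mainl}, giving at most $4/\epsilon$ insertions and deletions in total. Your additional remarks --- that internal path vertices keep their $H$-degree so no new violations arise, and that the triggering flip of $(u,v)$ is absorbed into the path accounting --- match the paper's (largely implicit) justification at the same level of rigor.
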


Thus far we have explained how to find the alternating paths starting at $v$ and $u$ in lay terms, and also proved that they contain together at most $\frac{4}{\epsilon}$ edges. However, we still haven't described the data structures required for finding and augmenting these paths efficiently.  
In order to find the alternating paths, we need to maintain the necessary data structures to identify full and deficient edges. Each vertex $x$ will maintain its degree in $H$, $d_H(x)$, as well as a partition of its adjacent edges into three lists: (1) $F(x)$, list of $x$'s full adjacent edges, (2) $D(x)$, list of $x$'s deficient adjacent edges, and (4) $R(x)$, list of $x$'s remaining edges. 
Consider the process of constructing the alternating path and assume we are currently at vertex $x$. If we are looking for a full (resp., deficient) edge, we check $F(x)$ (resp., $D(x)$). 
Each test takes $O(1)$ in the worst case (whether the list is empty or not). If the list is non-empty, we remove from it one arbitrary edge (say the first one on the list), again in $O(1)$ time. After we find a full (resp., deficient) edge and continue on our path, we must then remove the previous edge from the list $F(x)$ (resp., $D(x)$) and add it to $R(x)$, yet again in $O(1)$ time. Since we spend $O(1)$ time for processing each vertex along the path, computing and augmenting the entire alternating path (which also concerns updating $F,D$ and $R$ for the vertices along the way) can be done in total $O(\frac{1}{\epsilon})$ time. Moreover, at the end of each such update, we have (up to) 2 vertices whose degree in $H$ has changed, i.e., the final vertex of $v$'s alternating path and the final vertex in $u$'s alternating path. 
Consequently, for each neighbor $w$ of any of these (up to) 2 vertices whose degree in $H$ has changed,
we spend $O(1)$ time to update the lists $F(w)$, $D(w)$ and $R(w)$ accordingly. Since the degree in $G$ is upper bounded by $\Delta$, the worst-case update time of this process is $O(\Delta)$. Summarizing, the total worst-case update time is $O(\Delta + \frac{1}{\epsilon})$. 

\begin{algorithm}
\caption{Insert $(u,v)$}
\begin{algorithmic}
\IF {$d_H(u)+d_H(v) < (1-\epsilon)\beta$}
\STATE $E_H \leftarrow E_H \cup \{(u,v)\}$
\STATE HandleFull $(v)$
\STATE HandleFull $(u)$
\ENDIF
\end{algorithmic}
\end{algorithm}

\begin{algorithm}
\caption{Delete $(u,v)$}
\begin{algorithmic}
\IF {$(u,v)\in E_H$}
\STATE $E_H \leftarrow E_H \setminus \{(u,v)\}$
\STATE HandleDeficient $(v)$
\STATE HandleDeficient $(u)$
\ENDIF
\end{algorithmic}
\end{algorithm}

\begin{algorithm}
\caption{HandleFull $(p_i)$}
\begin{algorithmic}
\IF {EMPTY($F(p_i)$) = FALSE}
\STATE $x \leftarrow \mbox{POP}(F(P_i))$ // i.e., we set $x$ as the first element of $F(p_i)$, and then remove $x$ from $F(p_i)$
\STATE $E_H \leftarrow E_H \setminus \{(p_i,x)\}$
\STATE HandleDeficient $(x)$
\ENDIF
\end{algorithmic}
\end{algorithm}

\begin{algorithm}
\caption{HandleDeficient $(p_i)$}
\begin{algorithmic}
\IF {EMPTY($D(p_i)$) = FALSE}
\STATE $x \leftarrow \mbox{POP}(D(p_i))$ // i.e., we set $x$ as the first element of $D(P_i)$, and then remove $x$ from $D(p_i)$
\STATE $E_H \leftarrow E_H \cup \{(p_i,x)\}$
\STATE HandleFull $(x)$
\ENDIF
\end{algorithmic}
\end{algorithm}

\section{Maintaining an EDCS, Part II: Improved Update Time $O_\eps(\sqrt{\Delta})$}
\label{partII}
In this section we achieve roughly a quadratic improvement over the worst-case update time achieved in~\Cref{partI}.
To this end, we tweak the algorithm of~\Cref{partI} slightly: instead of {\em notifying} all the neighbors of a vertex following a {\em permanent change} in its degree in $H$, we only notify a subset of its neighbors; if this vertex is at the end of an alternating path, we refer to its degree change as permanent, 
to distinguish from a momentary degree change of a vertex not in the end of that path --- though of course the degree of that vertex in $H$ might change again in subsequent update steps.
By ``notify'' a neighbor we mean that the algorithm updates all the relevant  data structures of that neighbor concerning the degree change.
Next, we explain this tweak in detail.
For each vertex $x$, we will maintain a {\em cyclic queue} $Q_x$ that holds all its neighbors, and each time  
there is a permanent change in $x$'s degree in $H$ (since it is the last vertex of an alternating path), it will only notify the first $\frac{10\Delta}{\epsilon\beta}$ neighbors in $Q_x$ (so that they can update their relevant data structures), and then these neighbors will be moved to the end of $Q_x$;  new neighbors will be added to the end of $Q_x$ as well, and the point is that once a neighbor (old or new)  has been moved to the end of $Q_x$, all its data structures are up-to-date with respect to $x$'s current degree in $H$. Edge deletions incident on $x$ are also reflected in $Q_x$, by simply removing these vertices from $Q_x$, wherever they might be there (via appropriate pointers).
Clearly, this tweak reduces the worst-case update time from $O(\Delta + 1/\eps)$ to $O(\frac{\Delta}{\epsilon\beta} + 1/\eps)$, 
which we record in the next lemma for further use:
\begin{lemma} \label{tweakedl}
The worst-case update time of this tweaked algorithm is $O(\frac{\Delta}{\epsilon\beta} + 1/\eps)$.
\end{lemma}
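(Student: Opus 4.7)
The plan is to decompose the work performed by a single update step into three components and charge each one separately against the claimed bound. First, I would observe that the computation of the two alternating paths (one starting at $v$, the other at $u$) is done exactly as in \Cref{partI}: by \Cref{mainl} each such path has length at most $2/\epsilon$, and the lists $F(\cdot)$, $D(\cdot)$, $R(\cdot)$ allow each step along a path (looking up a full or deficient edge, flipping its membership in $H$, moving it between the three lists) to be executed in $O(1)$ worst-case time. Hence the path-processing phase costs $O(1/\epsilon)$ in total, independently of $\Delta$ and $\beta$.

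Next, I would account for the neighbor-notification step, which is the only component that has actually been modified relative to \Cref{partI}. After each of the two alternating-path augmentations, only its terminal vertex, call it $x$, experiences a \emph{permanent} change to $d_H(x)$. For each such $x$, instead of touching every neighbor in $N(x)$ (which could cost $\Theta(\Delta)$), the tweaked algorithm pops the first $\lceil 10\Delta/(\epsilon\beta)\rceil$ entries of the cyclic queue $Q_x$, re-classifies each of them with respect to the new value of $d_H(x)$ inside their copies of $F(\cdot), D(\cdot), R(\cdot)$, and re-appends them to the tail of $Q_x$. Summed over the at most two terminal vertices in one update step, this yields $O(\Delta/(\epsilon\beta))$ work, provided each per-neighbor touch is $O(1)$.

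I would then verify that the maintenance of the queues themselves under the underlying update in $G$ incurs no extra cost: an edge insertion $(u,v)$ appends $v$ to the tail of $Q_u$ and vice versa, while an edge deletion locates the relevant entry in $Q_u$ (respectively $Q_v$) via a cross-pointer and removes it; both take $O(1)$ worst-case time. Adding the three contributions gives $O(1) + O(1/\epsilon) + O(\Delta/(\epsilon\beta)) = O(\Delta/(\epsilon\beta) + 1/\epsilon)$, as required.

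The main obstacle — really the only nontrivial point — is to make the per-neighbor notification cost $O(1)$ in the worst case. This forces the implementation of $Q_x$ as a doubly-linked cyclic list supporting \emph{pop-front}, \emph{append-tail} and \emph{arbitrary delete} in constant time, together with cross-pointers from each occurrence of a vertex $y$ in $Q_x$ to $y$'s position in the lists $F(x), D(x), R(x)$ (and symmetrically back to $x$'s record in $y$'s structures). Once this layered bookkeeping is laid out, the time charge above is immediate; note that correctness of the EDCS invariants under the weaker notification policy is an orthogonal issue that the lemma does not claim and that is handled elsewhere in the section.
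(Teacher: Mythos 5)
Your proposal is correct and takes essentially the same route as the paper, which obtains the bound by observing that the Part~I analysis carries over verbatim except that each of the (at most two) terminal vertices of the alternating paths now notifies only $\lceil 10\Delta/(\epsilon\beta)\rceil$ neighbors from its cyclic queue instead of up to $\Delta$, while path augmentation and queue maintenance remain $O(1/\epsilon)$ and $O(1)$ respectively. One minor correction: in the tweaked algorithm the full/deficient classification is made with respect to \emph{estimated} degrees, so the path-length bound to invoke is the $\frac{5}{2\epsilon}$ bound of \Cref{recourse} rather than the $\frac{2}{\epsilon}$ bound of \Cref{mainl}; this is still $O(1/\epsilon)$ and does not affect the claimed update time.
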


It remains   to analyze the ramifications of notifying only a fraction of the neighbors of a vertex following its permanent degree change in $H$. 
For a vertex $x$, recall that $d_H(x)$ denotes $x$'s degree in $H$; for a neighbor $w$ of $x$, i.e., $w \in N(x)$, denote by $\tilde{d}^w_H(x)$ the
estimation that $w$ has on $x$'s degree in $H$.
We are interested in upper bounding $\texttt{Dis}(H) := \max_{x \in V, w \in N(x)} \{d_H(x) - \tilde{d}^w_H(x)\}$, 
i.e., the maximum discrepancy, over all vertices, between the degree of a vertex in $H$ and its estimated degree by any of its neighbors. 
The maximum degree in $G$ is $\le \Delta$, thus the number of vertices in the cyclic queue $Q_x$ of $x$ is at most $\Delta$. 
Since $Q_x$ is cyclic, and it updates one batch of $\frac{10\Delta}{\epsilon\beta}$ vertices at a time, it follows that $\frac{\epsilon\beta}{10}$ would be the maximum number of batches before $w$ gets notified by $v$ again by its degree change in $H$, hence
$\texttt{Dis}(H) \le  \frac{\epsilon\beta}{10}$.  

The following lemma shows that we can still upper bound the length of any alternating path by $O(1/\eps)$.
\begin{lemma} \label{recourse}
For any path $P$ of alternating full and deficient edges, we have $|P| \le \frac{5}{2\epsilon}$.
\end{lemma}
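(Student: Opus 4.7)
My plan is to follow the template of the proof of Lemma 3.1 closely, but carry an additive error term through the calculation that accounts for the discrepancy between true and estimated degrees. The key input is the bound $|\tilde d^{w}_H(x) - d_H(x)| \le \tfrac{\epsilon\beta}{10}$ for every vertex $x$ and neighbor $w$ (established just before the lemma from the cyclic-queue argument), which I treat as a black box.

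First I would split into the two symmetric cases depending on whether $P=(p_0,p_1,\ldots,p_k)$ starts with a full edge or a deficient edge, and focus on the first. I set $d:=d_H(p_0)$ and note $d\le \beta$, since $p_0$ has some incident full edge $(p_0,p_1)$ (from $p_0$'s own perspective, and $p_0$ knows its own true degree), giving $d_H(p_0)+\tilde d^{p_0}_H(p_1)=\beta$ with $\tilde d^{p_0}_H(p_1)\ge 0$. For each transition $p_i\to p_{i+1}$ along a full (resp.\ deficient) edge, the labelling is from $p_i$'s perspective, yielding $d_H(p_i)+\tilde d^{p_i}_H(p_{i+1})=\beta$ (resp.\ $(1-\epsilon)\beta$). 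Writing $e_i:=\tilde d^{p_i}_H(p_{i+1})-d_H(p_{i+1})$, I have $|e_i|\le \tfrac{\epsilon\beta}{10}$, and so $d_H(p_{i+1})=\beta - d_H(p_i) - e_i$ or $d_H(p_{i+1})=(1-\epsilon)\beta - d_H(p_i) - e_i$, depending on the edge type.

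Second I would telescope. Just as in Lemma 3.1, only the deficient edges contribute the $-\epsilon\beta$ term, so for every $i$ with $2i\le k$,
\[
d_H(p_{2i}) \;=\; d \;-\; i\,\epsilon\beta \;+\; E_i,\qquad |E_i| \le \sum_{j=0}^{2i-1}|e_j| \le \tfrac{2i\,\epsilon\beta}{10} \;=\; \tfrac{i\,\epsilon\beta}{5}.
\]
As in the original proof, every vertex of $P$ has an incident edge that lies in $H$ (by the alternating pattern, interior vertices always do, and the relevant endpoint does too after parity considerations), so $d_H(p_{2i})\ge 1$. Combined with the upper bound on $E_i$ this gives
\[
1 \;\le\; d_H(p_{2i}) \;\le\; d - i\,\epsilon\beta + \tfrac{i\,\epsilon\beta}{5} \;=\; d - \tfrac{4 i\,\epsilon\beta}{5},
\]
and using $d\le \beta$ yields $i \le \tfrac{5(\beta-1)}{4\epsilon\beta} < \tfrac{5}{4\epsilon}$. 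Taking the largest valid $i$ gives $|P|\le 2i \le \tfrac{5}{2\epsilon}$. The case in which $P$ starts with a deficient edge is handled analogously by shifting the telescoping index by one, using $d:=d_H(p_1)\le\beta$ via the incident full edge $(p_1,p_2)$.

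The main obstacle I anticipate is not in the algebra but in the bookkeeping: I must argue that the identity $d_H(p_i)+\tilde d^{p_i}_H(p_{i+1})\in\{\beta,(1-\epsilon)\beta\}$ holds at the precise moment $p_{i+1}$ is chosen, even though edges are being added to and removed from $H$ as the path is augmented. I would address this either by evaluating each identity at the snapshot immediately before the corresponding edge swap (so the label is exact by the definition of the lists $F(p_i)$, $D(p_i)$), or equivalently by freezing the pre-augmentation snapshot and folding the at-most-$\pm 1$ perturbations caused by earlier swaps along $P$ into the error terms $e_j$, which remain within the $\tfrac{\epsilon\beta}{10}$ slack since $1\ll \epsilon\beta$ (by the assumption $\beta \ge 32\lambda^{-3}$ in Lemma~\ref{keyl}).
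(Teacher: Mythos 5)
Your proof is correct and follows essentially the same route as the paper: both telescope the degree estimates along the alternating path, carrying the $\tfrac{\epsilon\beta}{10}$ discrepancy as an additive error per step to get $d_H(p_{2i})\le d-\tfrac{4i\epsilon\beta}{5}$, then invoke positivity of degrees and $d\le\beta$ to bound the length by $\tfrac{5}{2\epsilon}$. Your explicit error terms $e_i, E_i$ and the remark about evaluating each full/deficient label at the moment the edge is popped are just a cleaner bookkeeping of what the paper's $\pm$-notation absorbs implicitly.
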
 
\begin{proof}
Consider first the case that $P=(p_0,p_1,...,p_k)$ starts with a full edge. Recall that ${\tilde{d}^{p_i}_H(p_{i+1})}$ is the estimation of $p_i$ on the degree of $p_{i+1}$ in $H$. Let $d=d_H(p_0)$. 
Since $(p_0,p_1)$ is full "in the eyes" of $p_0$ (i.e., according to the degree estimation on $p_1$ that $p_0$ has), we get that $\tilde{d}^{p_0}_H(p_1) =\beta -d_H(p_0) =\beta-d$.
Since $\texttt{Dis}(H) \le  \frac{\epsilon\beta}{10}$, we get that
$d_H(p_1)$ and $\tilde{d}^{p_0}_H(p_1)$ differ by at most an additive term of $\frac{\epsilon\beta}{10}$,
i.e., $d_H(p_1) = \tilde{d}^{p_0}_H(p_1) \pm \frac{\epsilon\beta}{10}$;
in what follows we shall use the $\pm$ operator to ``absorb'' these  additive terms. 
Similarly, since $(p_1,p_2)$ is deficient in the eyes now of $p_1$,  
we get that $$\tilde{d}^{p_1}_H(p_2) ~=~ (1-\epsilon)\beta-d_H(p_1) ~=~ (1-\epsilon)\beta-\tilde{d}^{p_0}_H(p_1)\pm \frac{\epsilon\beta}{10} ~=~ (1-\epsilon)\beta-\beta+d\pm \frac{\epsilon\beta}{10}=d-\epsilon\beta \pm \frac{\epsilon\beta}{10}.$$ Continuing in the same way, we get   $\tilde{d}^{p_2}_H(p_3)=\beta-d+\epsilon\beta\pm \frac{2\epsilon\beta}{10}$, and then $\tilde{d}^{p_3}_H(p_4)=d-2\epsilon\beta\pm \frac{3\epsilon\beta}{10}$, and in general we inductively get $\tilde{d}^{p_{2i-1}}_H(p_{2i})=d-i\epsilon\beta \pm \frac{(2i-1)\epsilon\beta}{10}$, which implies that $d_H(p_{2i}) \le d-i\epsilon\beta+ \frac{i\epsilon\beta}{5}=d-\frac{4i\epsilon\beta}{5}$. Since each vertex in $P$ has an incident full edge, all vertices have a positive degree, and since $d \leq \beta$, it follows that $|P| \le \frac{5}{2\epsilon}-1$. If $P$ starts with a deficient edge, we can go through the same argument again (here too following similar lines as those in the proof of \Cref{mainl}),
to conclude that  $|P| \le \frac{5}{2\epsilon}$.
\qed
\end{proof}

\begin{corollary} \label{rec}
The worst-case recourse bound for maintaining $H$ is $O(1/\epsilon)$.
\end{corollary}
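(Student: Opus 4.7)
The plan is to read off the recourse bound directly from the structure of the update procedure described in Sections \ref{partI} and \ref{partII}, combined with the length bound from Lemma \ref{recourse}.

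First, I would recall that an edge update in $G$ triggers at most two alternating-path augmentations: one rooted at $v$ and one rooted at $u$, executed sequentially by \texttt{HandleFull}/\texttt{HandleDeficient}. Each edge traversed on such a path corresponds to exactly one modification of $E_H$: a deficient edge gets inserted into $H$, and a full edge gets deleted from $H$ (alternately along the path). In addition, the updated edge $(u,v)$ itself may be inserted into or deleted from $H$ at the beginning of the procedure. Thus the total number of edge changes to $H$ caused by a single update to $G$ is at most $|P_v| + |P_u| + 1$, where $P_v$ and $P_u$ are the alternating paths emanating from $v$ and $u$ respectively.

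Next, I would invoke Lemma \ref{recourse}, which was just established for the tweaked algorithm of \Cref{partII}, to conclude that $|P_v|, |P_u| \le \frac{5}{2\epsilon}$. Substituting this into the above bound gives a total recourse of at most $2 \cdot \frac{5}{2\epsilon} + 1 = \frac{5}{\epsilon} + 1 = O(1/\epsilon)$, and since the bound on path length from Lemma \ref{recourse} holds in the worst case (not on average), this recourse bound is also worst-case.

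I do not expect any real obstacle here; the statement is essentially a bookkeeping corollary of Lemma \ref{recourse}. The only minor subtlety worth noting is that, as discussed in the Remark of \Cref{partI}, the two alternating paths $P_v$ and $P_u$ may share vertices (and even the endpoint $u$ itself may lie on $P_v$), and the path need not be simple in the general-graph setting. However, each edge operation in \texttt{HandleFull}/\texttt{HandleDeficient} still contributes exactly one modification to $E_H$, so the counting argument is unaffected, and the $O(1/\epsilon)$ worst-case recourse bound follows.
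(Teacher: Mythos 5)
Your argument is correct and matches the paper's (implicit) reasoning: the corollary is read off directly from Lemma \ref{recourse}, since each update augments at most two alternating paths of length at most $\frac{5}{2\epsilon}$, each traversed edge corresponding to one insertion or deletion in $H$, plus the updated edge itself. Your extra remark about non-simple paths and the possible overlap of the two paths is consistent with the paper's Remark in \Cref{partI} and does not change the counting.
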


We next argue that, while Properties 
(\hyperlink{P1}{P1}) and (\hyperlink{P2}{P2}) may no longer hold ---
and thus $H$ may no longer provide a $(\beta, (1-\epsilon)\beta)$-EDCS ---
similar properties, for a slightly different choice of parameters, do hold.
\begin{lemma} \label{crux}
Define $\gamma = \beta(1+\epsilon/10)$.
The  tweaked algorithm of this section maintains a $(\gamma,(1-2\epsilon)\gamma)$-EDCS, or in other words
it satisfies the following  two  properties:
\begin{enumerate}
	\item[(\textbf{P1'})] If $(u,v) \in H$, then $w(u,v) \leq \gamma$.
	\item[(\textbf{P2'})] If $(u,v) \notin H$, then $w(u,v) \geq (1-2\epsilon)\gamma$.
\end{enumerate}

\end{lemma}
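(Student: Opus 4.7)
The plan is to lift the exact invariants (P1)/(P2) maintained by the algorithm of \Cref{partI} --- which used true degrees $d_H(\cdot)$ at every vertex --- to slack invariants phrased in terms of ``local views'' $\tilde d^u_H(v)$, and then to absorb the slack into the inflated constants that define $\gamma$. For a vertex $u$ and an incident edge $(u,v)$, write $w_u(v) := d_H(u) + \tilde d^u_H(v)$ for the weight of $(u,v)$ seen by $u$'s local data structures (note that $d_H(u)$ is known exactly to $u$, whereas $d_H(v)$ is only estimated via $u$'s cyclic queue). Since the tweaked algorithm is obtained from the algorithm of \Cref{partI} by replacing every reference to $d_H(v)$ at a vertex $u \neq v$ with $\tilde d^u_H(v)$, the very same case analysis carries over: after an update has been fully processed, every vertex $u$ satisfies $w_u(v) \leq \beta$ for each in-$H$ incident edge $(u,v)$ and $w_u(v) \geq (1-\epsilon)\beta$ for each out-of-$H$ incident edge $(u,v)$.

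Next, I would translate these local invariants into bounds on the true weight $w(u,v) = d_H(u) + d_H(v)$ using the two-sided discrepancy bound $|d_H(v) - \tilde d^u_H(v)| \leq \epsilon\beta/10$ justified by the cyclic-queue batching (cf.\ the derivation preceding \Cref{recourse}). From $w(u,v) = w_u(v) + (d_H(v) - \tilde d^u_H(v))$ one immediately obtains
\[
(u,v) \in H \;\Longrightarrow\; w(u,v) \leq \beta + \tfrac{\epsilon\beta}{10} = \gamma,
\]
\[
(u,v) \notin H \;\Longrightarrow\; w(u,v) \geq (1-\epsilon)\beta - \tfrac{\epsilon\beta}{10} = \bigl(1 - \tfrac{11\epsilon}{10}\bigr)\beta.
\]
The first display is precisely (P1'). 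For (P2') it remains to verify that $(1-\tfrac{11\epsilon}{10})\beta \geq (1-2\epsilon)\gamma$; substituting $\gamma = (1+\epsilon/10)\beta$ and expanding, this reduces to $\tfrac{8\epsilon}{10} + \tfrac{\epsilon^2}{5} \geq 0$, which is immediate.

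The main obstacle is the first step: arguing carefully that the case analysis of \Cref{partI} transfers verbatim when true degrees are replaced by estimated ones at non-incident vertices. The subtle point is that a batch notification at the end of an alternating path simultaneously updates the estimates of several neighbors, and one must check that the algorithm's bookkeeping of the lists $F(w), D(w), R(w)$ at each notified neighbor remains consistent with its updated view, so that no local violation of the form $w_u(v) > \beta$ or $w_u(v) < (1-\epsilon)\beta$ persists once the update has been fully processed. This is essentially the same consistency argument that underlies \Cref{recourse}, and once it is pinned down, (P1') and (P2') follow from the elementary algebra above.
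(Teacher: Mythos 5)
Your proposal is correct and follows essentially the same route as the paper's proof: the algorithm enforces the EDCS constraints only with respect to estimated weights $d_H(u)+\tilde d^u_H(v)$, the cyclic-queue batching bounds the discrepancy by $\epsilon\beta/10$, and the slack is absorbed into $\gamma=(1+\epsilon/10)\beta$, with the same final algebra $(1-\tfrac{11\epsilon}{10})\beta\ge(1-2\epsilon)\gamma$. Your explicit two-sided phrasing of the discrepancy and the flag about the transfer of the Part~I case analysis to estimated degrees only make precise what the paper states more informally.
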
 

\begin{proof}
If $(u,v) \in H$, it no longer holds that $w(u,v) \le \beta$, due to the discrepancy $\texttt{Dis}(H) \le  \frac{\epsilon\beta}{10}$ between the degrees of vertices and the estimations of those by their neighbors.
Following a permanent increase in the degree of some vertex $u$ in $H$, it is increase-safe by definition, meaning that none of its adjacent edges in $H$ are full. However, the increase-safe condition is defined with respect to the estimated degree at the other endpoints, so in fact it only holds at this stage that 
$$w(u,v) ~=~ d_H(u) + \tilde{d}^u_H(v) ~\le~ d_H(u) + d_H(v) + \epsilon \beta / 10 \le \beta(1+\epsilon/10).$$
Symmetrically, it no longer holds that $w(u,v) \ge (1-\epsilon)\beta$, but rather $w(u,v) \ge (1-\epsilon)\beta - \epsilon \beta / 10 = \beta(1- 11\epsilon/10)$.
We next argue that $H$ provides a $(\gamma,(1-2\epsilon)\gamma)$-EDCS, for $\gamma = \beta(1+\epsilon/10)$.
Indeed, for any $(u,v) \in H$, we have $w(u,v) \le \beta(1+\epsilon/10) = \gamma$, whereas for any $(u,v) \nin H$, we have $w(u,v) \ge \beta(1- 11\epsilon/10) \ge \gamma(1-2\epsilon)$, where the last inequality holds for any $\epsilon>0$. 
\qed
\end{proof}

\paragraph{Putting it all together.~}
We need to calibrate the parameters properly, so as to be able to apply \Cref{keyl}.
In \Cref{partI} we did not bother ourselves with such details, since the bound achieved there just served as a stepping stone towards the one achieved in this section; we simply mentioned there that we maintain a $(\beta,(1-\eps)\beta)$-EDCS.
Next, we will be more precise.
 Our aim here is to argue that the maintained EDCS $H$ is a $(3/2+\eps)$-approximate matching sparsifier for $G$, for a parameter $\eps > 0$; thus $\eps$ is now fixed.
We take $\lambda' = \lambda / 2 = \frac{\eps}{200}$ and we will take $\beta$ so that it is at least $32 \lambda^{-3}$.
Now the algorithm of \Cref{partI} will maintain a $(\beta,(1-\lambda')\beta)$-EDCS (so $\eps$ as used in \Cref{partI} is replaced by $\lambda'$),
but more importantly,  
by \Cref{crux}, the algorithm of \Cref{partII} maintains a $(\gamma,\gamma^-)$-EDCS, for
$\gamma^- = (1-2\lambda')\gamma = (1-\lambda)\gamma$. 
It remains to verify that these parameters satisfy the conditions of \Cref{keyl}, indeed:
$\lambda = 2 \lambda' = \frac{\eps}{100}$,  $\gamma > \beta \ge 32 \lambda^{-3}$ and $\gamma^- = (1-\lambda)\gamma$.
Thus $H$ provides a $(3/2+\eps)$-approximate matching sparsifier for $G$, as required.
 
By \Cref{tweakedl}, the worst-case update time of maintaining
$H$ is $O(\frac{\Delta}{\epsilon\beta} + 1/\eps)$. 
Moreover, note that $H$ has a maximum degree of $O(\beta)$ at all times. We use the Gupta-Peng \cite{GP13} algorithm, whose worst-case update time per change in $H$ is $O(\frac{\beta}{\epsilon^2})$,
to maintain a $(1+\epsilon)$-MCM on top of $H$. 
Combining the approximation ratio of the EDCS from Lemma~\ref{keyl} with the $(1+\epsilon)$ approximation of the Gupta-Peng \cite{GP13} algorithm yields a $(3/2+\varepsilon)(1+\varepsilon) \leq (3/2 + 7/2\varepsilon)$ approximation. Rescaling $\varepsilon$ by a factor $7/2$ yields the desired $(3/2+\eps)$-approximation, which will only have a constant factor effect on the update time.
By \Cref{rec}, the worst-case recourse bound of $H$ is $O(1/\eps)$,
thus a single update in $G$ is handled within a worst-case update time of $O(\frac{\beta}{\epsilon^3})$. 
It follows that the worst-case update time for maintaining a $(1+\eps)$-MCM on top of $H$ is 
$O(\frac{\Delta}{\epsilon\beta} +  \frac{\beta}{\epsilon^3})$, for any parameter $\beta \ge 32 \lambda^{-3}$.
We can optimize by substituting $\beta$ with $\eps \sqrt{\Delta}$ (assuming $\eps \sqrt{\Delta} \ge 32 \lambda^{-3}$, otherwise we need to increase $\beta$ accordingly), which yields:
\begin{lemma}
\label{deg_optimized}
The worst-case update time for maintaining a $(1+\eps)$-MCM for $H$ is $O(\sqrt{\Delta}/\eps^2 + \eps^{-6})$.
\end{lemma}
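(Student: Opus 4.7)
The plan is to directly optimize the expression
\[
T(\beta) \;=\; O\!\left(\frac{\Delta}{\eps\beta} \;+\; \frac{\beta}{\eps^3}\right),
\]
which was already shown, in the paragraph preceding the statement, to upper bound the combined worst-case update time for maintaining $H$ plus running the Gupta--Peng \cite{GP13} algorithm on top of $H$, subject to the feasibility constraint $\beta \ge 32\lambda^{-3}$ with $\lambda = \eps/200$. Recall that the $\Omega(\eps^{-3})$ lower bound on $\beta$ is precisely what enables Lemma~\ref{keyl} to certify the $(3/2+\eps)$-approximation, so it cannot be dropped.

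My plan is to split into two regimes. In the generic regime $\eps\sqrt{\Delta} \ge 32\lambda^{-3}$, I will use the textbook balance-the-summands choice $\beta := \eps\sqrt{\Delta}$, which is feasible by the regime hypothesis; plugging into $T(\beta)$, each summand evaluates to $\sqrt{\Delta}/\eps^2$, giving $T = O(\sqrt{\Delta}/\eps^2)$. In the complementary edge case $\eps\sqrt{\Delta} < 32\lambda^{-3}$, the unconstrained minimizer is infeasible, so I will instead pick the smallest admissible value $\beta := 32\lambda^{-3} = \Theta(\eps^{-3})$. The second summand $\beta/\eps^3$ is then $\Theta(\eps^{-6})$ by construction; for the first summand, the regime hypothesis rearranges to $\sqrt{\Delta} = O(\eps^{-4})$ (equivalently $\Delta = O(\eps^{-8})$), so $\Delta/(\eps\beta) = O(\eps^{-8}/\eps^{-2}) = O(\eps^{-6})$. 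Taking the union of the two regimes yields the claimed $O(\sqrt{\Delta}/\eps^2 + \eps^{-6})$.

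I expect no real obstacle, as this is a straightforward two-case parameter-balancing calculation on top of the work already done above. The only item deserving attention is verifying that the feasibility constraint $\beta = \Omega(\eps^{-3})$ inherited from Lemma~\ref{keyl} contributes exactly the additive $\eps^{-6}$ term and does not worsen the leading $\sqrt{\Delta}/\eps^2$ factor; the case split makes this transparent because in the large-$\Delta$ regime the constraint is inactive, and in the small-$\Delta$ regime the entire bound collapses to $O(\eps^{-6})$.
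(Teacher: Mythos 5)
Your proposal is correct and follows essentially the same route as the paper: both balance the already-established bound $O(\Delta/(\eps\beta) + \beta/\eps^3)$ by setting $\beta = \eps\sqrt{\Delta}$ subject to $\beta \ge 32\lambda^{-3}$. The only difference is that you spell out the small-$\Delta$ edge case (where the paper just says ``otherwise we need to increase $\beta$ accordingly'') and verify it indeed collapses to $O(\eps^{-6})$, which is a fine, slightly more explicit treatment of the same argument.
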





\section{Input Sparsification, Part III: Final Update Time $O_\eps(m^{1/4})$ or $O_\eps(\sqrt{\alpha})$}
\label{partIII}

In this section we show how to combine a simple {\em sparsification} step with the $O_\eps(\sqrt{\Delta})$ update time algorithm of 
\Cref{partII} (Part II) to achieve the desired update time of  $O_\eps(\sqrt{\alpha})$, which is  $O_\eps(m^{1/4})$ in general graphs.

The  key component is a matching sparsification algorithm by Solomon~\cite{Sol18}, which works as follows. Suppose the arboricity of a given $m$-edge graph $G$ is bounded above by $\alpha$; as mentioned, $\alpha$ is no greater than $\sqrt{m}$. 
Every vertex marks up to $\eta$ arbitrary incident edges; that is, if a vertex has at most $\eta$ incident edges then it marks all of them,
otherwise it marks exactly $\eta$ arbtirary incident edges.
An edge that is marked twice (by both endpoints) is added to the sparsified graph, denoted by $G' = G'_\eta$. 
In \cite{Sol18} it was shown that for $\eta= 5(5/\eps + 1)2\alpha$, $G'$ is a $(1+\epsilon)$-approximate matching sparsifier for $G$,
i.e., $\mu(G) \le (1+\eps)\mu(G')$. Moreover, the maximum degree of $G'$ is trivially bounded by $\eta = O(\alpha/\eps)$. 

Thus, our goal is to dynamically maintain $G'$ on top of the input dynamic graph $G$, and feed $G'$ (rather than $G$) to the algorithm from Part II. In what follows we fix $\eps > 0$ to be an arbitrary parameter.

\paragraph{A density-sensitive degree bound.~} 
First, we shall assume that $\alpha$ is a fixed upper bound on the arboricity of the dynamic graph $G$,
and show that the sparsifier $G'$ can be maintained 
with a constant worst-case update time and recourse bound, where the maximum degree of $G'$ is always  $O_\eps(\alpha)$.

\begin{lemma}
\label{lem:sparsearboricity}
We can maintain $G' = G_\eta$, for $\eta:=5(5/\eps + 1)2\alpha$, with a constant worst-case (deterministic) update time and recourse bound.
\end{lemma}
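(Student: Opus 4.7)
The plan is as follows. For each vertex $v$, I maintain two doubly linked lists $M(v)$ and $U(v)$ containing the marked and the unmarked incident edges of $v$, respectively, together with, for each incident edge, a single bit indicating whether $v$ marks it. The invariant is $|M(v)| = \min\{\eta, d_G(v)\}$ at all times, so that the marking agrees with the hypothesis of \Cref{basicsparse}. Each edge $e=(u,v)$ stores four cross-pointers---into its position in $M(v)$ or $U(v)$, and into its position in $M(u)$ or $U(u)$---along with its two mark bits. The sparsifier $G'$ is maintained explicitly (say as an adjacency list), with the invariant that $e \in G'$ iff both mark bits of $e$ are true.

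Upon insertion of an edge $(u,v)$ in $G$, I would process the two endpoints independently: for endpoint $v$, if $|M(v)| < \eta$, I push $(u,v)$ onto $M(v)$ and set $v$'s mark bit to true; otherwise I push $(u,v)$ onto $U(v)$ with $v$'s mark bit false. After processing both endpoints, if both bits are true, I add $(u,v)$ to $G'$. All of this is a constant number of $O(1)$ pointer operations and induces at most one edge change in $G'$.

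Upon deletion of an edge $(u,v)$ from $G$, I would first remove $(u,v)$ from $G'$ if present. Then, for each endpoint $v$ separately, I remove $(u,v)$ from whichever of $M(v),U(v)$ contains it; if the edge was in $M(v)$ and $U(v)$ is nonempty, I \emph{promote} an arbitrary edge $(v,w) \in U(v)$ into $M(v)$ by relinking its list entry and flipping $v$'s mark bit on $(v,w)$. If $w$'s mark bit on $(v,w)$ is already true (a single lookup via the stored bit), I add $(v,w)$ to $G'$; this is a clean insertion since $(v,w) \notin G'$ beforehand (as $v$'s mark bit on it was false). In total at most one edge is removed from $G'$ and at most two are added (one per promoted endpoint), giving $O(1)$ worst-case recourse, and the bookkeeping work is $O(1)$ worst case as well.

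The main subtlety I expect is keeping the two sides of each promoted edge consistent \emph{without scanning} a list: when we move $(v,w)$ from $U(v)$ to $M(v)$, the list-position pointer on $w$'s side never changes, but $v$'s mark bit on $(v,w)$ does, and that change may trigger $(v,w)$'s entry into $G'$---the cross-pointers are precisely what lets us detect and act on this in $O(1)$. The invariant $|M(v)| = \min\{\eta, d_G(v)\}$ is then preserved by construction: it is endangered only when a marked edge incident to $v$ is deleted, and the promotion step restores it whenever $d_G(v) \ge \eta$ still holds; if instead $d_G(v) < \eta$ after the deletion, $U(v)$ was empty beforehand and no promotion is needed.
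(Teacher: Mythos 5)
Your proof is correct and follows essentially the same approach as the paper: per-vertex marked/unmarked doubly-linked lists with cross-pointers, promotion of an arbitrary unmarked edge upon deletion of a marked one, and membership in $G'$ exactly when both endpoints mark the edge, yielding $O(1)$ worst-case time and recourse (at most one deletion and two insertions in $G'$ per update).
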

\begin{proof}
For every vertex $u$, we maintain a doubly-linked list of incident marked and unmarked edges, denoted by $LM(u)$ and $LU(u)$, respectively;
we also maintain a doubly-linked list of its incident edges in $G'$, denoted by $G'(u)$.
 Furthermore, every edge has mutual pointers to its various incarnations in all the lists it is contained in, as well as to the endpoints of the edge.

When a new edge $e=(u,v)$ is added to $G$, we add $e$ to $LM(u)$ (resp., $LM(v)$) if $|LM(u)|<\eta$ (resp., $|LM(v)| < \eta$), otherwise we add $e$ to $LU(u)$ (resp., $LU(v)$). If $e$ belongs to both $LM(u)$ and $LM(v)$, it is also added to $G'(u)$ and $G'(v)$, and thus to $G'$.
Clearly, the update time is constant and $G'$ can change by at most one edge.

When an edge $e = (u,v)$ is deleted from $G$, if $e\in LU(u)$ (resp. $LU(v)$), we simply remove it. 
If $e$ is in $G'$, we remove it from there by removing it from the corresponding lists $G'(u)$ and $G'(v)$.  
If $e\in LM(u)$ (resp., $e\in LM(v)$), we remove an arbitrary edge from $LU(u)$ (resp. $LU(v)$) if one exists and insert it into $LM(u)$ (resp., $LM(v)$); 
in the latter case, if an edge that is moved to $LM(u)$ (resp., $LM(v)$) is also marked by the other endpoint, it is also added to
$G'$.
Here too the update time is constant, but now $G'$ can change by at most three edges: $e$ may be deleted from $G'$, and the edges moved from $LU(u)$ and $LU(v)$ to $LM(u)$ and $LM(v)$, respectively, if any, may be inserted to $G'$.
\qed
\end{proof}

\begin{corollary} \label{cor1}
One can dynamically maintain a $(1+\eps)$-approximate matching sparsifier $G' = G'_\eta$ for any dynamic graph $G$ with a constant worst-case update time and recourse bound, where the maximum degree of $G'$ is always upper bounded by $\eta= 5(5/\eps + 1)2\alpha$ and $\alpha$ is a fixed upper bound on the arboricity.
\end{corollary}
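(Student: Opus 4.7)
The plan is to assemble the corollary by composing the dynamic maintenance procedure established in \Cref{lem:sparsearboricity} with the static sparsification guarantee from \cite{Sol18} (stated in our paper as \Cref{basicsparse}). \Cref{lem:sparsearboricity} already supplies the operational part: at every moment $G'$ is exactly the graph defined by the marking scheme with threshold $\eta$, and each edge insertion/deletion in $G$ is processed in $O(1)$ worst-case time while triggering at most a constant number (at most three) of edge changes in $G'$. Thus both the update time and the recourse bound are $O(1)$ in the worst case.

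Second, I would invoke \Cref{basicsparse} to upgrade the structural description of $G'$ into an approximation guarantee. For the arboricity bound $\alpha$ we are given, the choice $\eta = 5(5/\eps + 1)\cdot 2\alpha$ matches the form $c(\alpha/\eps)$ required by \Cref{basicsparse} for a suitable absolute constant $c$, so \Cref{basicsparse} immediately yields $\mu(G) \le (1+\eps)\mu(G')$ at every time step. Since the marking procedure run at a given snapshot of $G$ produces exactly the graph $G'$ that \Cref{lem:sparsearboricity} maintains, this static statement transfers to the dynamic setting: the invariant $\mu(G) \le (1+\eps)\mu(G')$ holds after every update.

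Finally, the degree bound is immediate from the marking rule itself: every vertex $u$ maintains $|LM(u)| \le \eta$ at all times, and $G'(u) \subseteq LM(u)$, so $d_{G'}(u) \le \eta$ throughout. Combining these three observations \textemdash{} constant worst-case update time, constant recourse, the $(1+\eps)$-approximation from \Cref{basicsparse}, and the trivial degree bound \textemdash{} gives the corollary.

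The statement is essentially a packaging of \Cref{lem:sparsearboricity} with an external black box, so no real obstacle is anticipated; the only point worth double-checking is that the constant hidden in the ``$c(\alpha/\eps)$'' of \Cref{basicsparse} agrees with the explicit choice $\eta = 5(5/\eps + 1)\cdot 2\alpha$ used here, which is how the proof in \cite{Sol18} is calibrated.
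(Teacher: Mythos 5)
Your proposal is correct and follows essentially the same route as the paper: the corollary is obtained by combining \Cref{lem:sparsearboricity} (constant worst-case update time and recourse for maintaining the doubly-marked graph $G'_\eta$) with the sparsification guarantee of \cite{Sol18} quoted just before it, which is stated there exactly for $\eta = 5(5/\eps+1)2\alpha$, together with the trivial degree bound $d_{G'}(u) \le |LM(u)| \le \eta$. No gap; the calibration concern you flag is already settled by the paper's explicit citation of that value of $\eta$.
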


\paragraph{A degree bound of $O(\sqrt{m}/\eps)$, for a dynamic $m$.~}
Next, we shall drop the assumption  that an upper bound $\alpha$ on the arboricity is given.
Starting from a graph with no edges, we will show how to maintain the sparsifier $G'$ with a constant worst-case update time and recourse, where the maximum degree of $G'$ is always  $O_{\eps}(\sqrt{m})$, where $m$ is the dynamically changing number of edges in the graph.

Clearly, in the degenerate case that the number of edges always remains the same up to, say, a factor of 2, we can use the argument of \Cref{lem:sparsearboricity} 
verbatim, by setting $\eta$ to be larger than $5(5/\eps + 1)2\sqrt{m}$ by a small constant factor.
The general case is handled using a standard trick of {\em periodic restarts}.
When the number of edges grows from $m$ to $2m$ or shrinks from $m$ to $m/2$, we shall restart the sparsifier $G'$ to be in accordance with the new value of the number of edges. For concreteness, if the number of edges at the moment of restart is denoted by $m_R$,
we set $\eta$ to be $5(5/\eps + 1) 4\sqrt{m_R}$, which is a factor $2$ larger than the bound used in $\Cref{lem:sparsearboricity}$. 
If we aimed for an amortized bound, a restart would be straightforward. Indeed, if the number of edges grows from $m$ to $m_R = 2m$, we set $\eta$ to $5(5/\eps + 1)4\sqrt{m_R} = 5(5/\eps + 1)4 \sqrt{2} \sqrt{m}$, and then scan, for each non-isolated vertex $v$ in the graph, all its incident edges, in order to mark its incident edges until (at most) $\eta$ are ''marked'', which involves moving edges from $LU(v)$ to $LM(v)$ until 
$|LM(v)| = \min\{\eta,|N(v)|\}$ and updating $G'(v)$ accordingly. The case that the number of edges decreases from $m$ to $m_R = m/2$ is handled symmetrically: we set $\eta$ to $5(5/\eps + 1)4\sqrt{m_R} = 5(5/\eps + 1)2\sqrt{2}\sqrt{m}$, and then scan for each non-isolated vertex $v$, all its incident edges, in order to ``unmark'' incident edges until (at most) $\eta$ are marked, which involves moving edges from $LM(v)$ to $LU(v)$ until $|LM(v)| = \min\{\eta,|N(v)|\}$ and updating $G'(v)$ accordingly.
In both cases this restart takes $O(m)$ time, assuming we keep track of   the non-isolated vertices in a separate list from the isolated ones, which can be amortized over all the edge updates occurred since the previous restart  to achieve a constant {\em amortized} update time and recourse. Moreover, at any point in time, since the values of $m_R$ and $m$ (the number of edges at the time of the last restart and at the current time, respectively) differ by at most a factor of 2,
the value of $\eta$, namely $5(5/\eps + 1)4\sqrt{m_R}$, is the same as $5(5/\eps + 1)4\sqrt{m}$ up to a factor of $\sqrt{2}$.
As $\eta$ upper bounds the maximum degree in $G'$, we conclude that the maximum degree of $G'$
is at most $5(5/\eps + 1)4 \sqrt{2} \sqrt{m}$.
Moreover, $\eta$ is at least $5(5/\eps + 1)2 \sqrt{2} \sqrt{m}$, which allows us to apply \cref{lem:sparsearboricity};
in fact,  we have an extra factor of $\sqrt{2}$ on the value $\eta$, which is not needed here, but will be needed next for achieving a worst-case bound. 
 
Since we are interested in achieving a constant {\em worst-case} update time and recourse, we must implement the restart process {\em gradually} rather than instantaneously, simulating $c$ ``computational steps'' rather than 1 per each update step, for a   large constant $c$ of our choice, where each computational step corresponds to a sufficiently large constant running time. 
Of course, each edge insertion and deletion to and from the graph should also be processed, as it normally would, so we process it as in \Cref{lem:sparsearboricity} but according to the new value of $\eta$ that was set at the start of the last restart --- which absorbs 1 computational step out of $c$. Then we use the remaining $c-1$ computational steps to continue the restart process from where we left off in the previous edge update. 
Since we are simulating $c$ computational steps per edge update and as we can control the value of $c$, the entire restart process will be completed within less than $\rho \cdot m$ update steps, where $\rho \ll 1$ is a  small constant of our choice, at which stage the number of edges is between $m_R - \rho \cdot m$ and  $m_R + \rho \cdot m$. The worst-case update time and recourse bound are both constants by design, and it is easy to see that the maximum degree of the maintained sparsifier $G'$ is equal,
up a  constant factor, to  
$5(5/\eps + 1)4\sqrt{m}$, where $m$ is the dynamic number of edges.
The aforementioned extra factor of $\sqrt{2}$ on the value of $\eta$ allows us to apply \cref{lem:sparsearboricity} in this case too.
Summarizing, we have proved:

\ignore{
\begin{lemma}
\label{lem:sparsegeneral}
Let $m_t$ be the number of edges of the dynamic graph $G = G_t$ at time $t$. After an edge update to $G$, we can update $G' = G_\eta$, 
for $\eta = \eta_t = 5(5/\eps + 1)2\sqrt(4m_t)$, with a constant worst-case (deterministic) update time and recourse bound.
\end{lemma}
\begin{proof}
Initially $t=0$ and $m_t = 0$.
We say that the edge set size is at level $i$ once $m_t$ reaches $2^i$, and it is {\em $i$-stable} from that moment onwards for as long as $m_t$ remains in the range
$m_t \in (2^{i-1}, 2^{i+1})$.  
Thus, the edge set size $m_t$ is $i$-stable (or {\em stable}, when $i$ is clear from the context) until it grows or shrinks by a factor of 2, which is when its level either increments or decrements by one.

The moment the edge set size becomes greater than $2^i$ (respectively becomes less than $2^i$), we initialize the sparsifier $G'$ for level $i+1$ (respectively level $i+2$).
At any given time, we therefore are maintaining two sparsifiers, one for the current level $i$ with $\eta_i= 5(5/\eps + 1)2\sqrt{2^{i+1}}$ and one for the next higher or next lower level with $\eta_{i+1}= 5(5/\eps + 1)2\sqrt{2^{i+2}}$ or $\eta_{i-1}= 5(5/\eps + 1)2\sqrt{2^{i}}$, receptively. 
The moment the number of edges decreases below $2^i$, we discard the sparsifier for level $i+1$ and start building a sparsifier for level $i-1$ (respectively when the number of edges increases above $2^i$, we discard the sparsifier for level $i-1$ and start building one for level $i+1$).
For as long as the edge set size is stable,
updating an existing sparsifier is straightforwardly done as in \Cref{lem:sparsearboricity};
the only subtlety is that the cutoff $\eta$ that is used in the construction of $G'$ was set as  $\eta= 5(5/\eps + 1)2\sqrt{2^i}$,
whereas if the current number of edges is $2^{i-1} \leq m_t\leq 2^{i+1}$, then we have $5(5/\eps + 1)2\sqrt{m_t} \le \eta= 5(5/\eps + 1)2\sqrt{2^{i+1}} \le 
5(5/\eps + 1)2 \sqrt{4m_t}$.

We now turn our attention to the case that the edge set size is not stable.
If $m_t$ reaches $2^{i+1}$, the degree constraint is no longer met; if $m_t$ reaches $2^{i-1}$, we still have a valid sparsifier, but the degree bound is too large for it to be effective.  (Of course, the problem is only when $m_t$ grows or shrinks by much more than a factor of 2, but for concreteness, we chose to draw the line at this factor.)

We first describe a procedure for constructing a sparsifier incrementally and then show how to implement this procedure with only constant overhead when processing dynamic updates.
Suppose $\eta'$ is the new (either larger or smaller than $\eta$) desired max-degree bound. We maintain the edges of the entire graph in a doubly linked list $L$.
Every node $u$ maintains a set of marked and unmarked incident edges in a doubly linked list $LM(u)$ and $LU(u)$, respectively. Furthermore, we have pointers from every edge to the entry of the lists it is contained in. 
Upon inserting an edge $e=(u,v)$, we have the following options. 1) Both $|LM(u)|<\eta'$ and $|LM(v)|<\eta'$. We add $e$ to both sets, as well as the sparsifier $G'_{new}$. 2) $|LM(u)|<\eta'$ and $|LM(v)|=\eta'$. In this case, we add $e$ to $LM(u)$ and $e$ to $LU(v)$. 3) $|LM(u)|=|LM(v)|=\eta'$. In this case we only add $e$ to $LU(u)$ and $LU(v)$.
Clearly the update time per edge is constant and after every update a node $u$ has marked $\min(\eta',N(u))$ incident edges, hence $G'_{new}$ is a valid sparsifier.

Now suppose that the edge set size exceeds $2^{i+1}$ at time $t_2$. We processed at least $2^{i}$ updates between the time $t_1$ at which the edge set size last exceeded $2^i$.
In this case, we begin to run the incremental construction on our edge set in an arbitrary order, followed by processing the dynamic updates.
In every step, we process $3$ edges in this manner. Then we process a total number of $3(t_2-t_1)$ edges, while the number of processed edges is at most $2^{i}+t_2-t_1$. Since $2(t_2-t_1)>2^{i}$, our sparsifier is up to date at time $t_2$.  

The case that the edge set size decreases is similar. Suppose that at time $t_1$ the size first decreases below $2^{i}$. Again, we will process at least $2^{i-1}$ further updates until the edge set size decreases below $2^{i-1}$. 
We begin to run the incremental construction on our edge set in an arbitrary order, followed by processing the dynamic updates.
In every step, we process $3$ edges in this manner. Suppose at time $t_2>t_1$ the edge set size is less than $2^{i-1}$. We processed a total number of $3(t_2-t_1)$ edges, while the number of processed edges is at most $2^{i-1}+t_2-t_1$. Since $3(t_2-t_1)>2^{i-1}$, our sparsifier is up to date at time $t_2$.
\qed
\end{proof}
}

\begin{corollary} \label{cor2}
One can dynamically maintain a $(1+\eps)$-approximate matching sparsifier $G' = G'_\eta$ for any dynamic graph $G$ with a constant worst-case update time and recourse bound, where the maximum degree of $G'$ is within a constant factor of $\eta= 5(5/\eps + 1)4\sqrt{m}$
and $m$ is the dynamic  number of edges.
\end{corollary}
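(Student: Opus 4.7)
The plan is to bootstrap from \Cref{lem:sparsearboricity}, which handles a fixed cutoff $\eta$, by dynamically updating $\eta$ to track $\sqrt{m}$. The main mechanism is periodic restarts: whenever the current edge count $m$ doubles or halves from the value $m_R$ recorded at the previous restart, recompute the sparsifier for the new cutoff $\eta=5(5/\eps+1)4\sqrt{m_R}$. Between restarts, $\sqrt{m}$ remains within a factor $\sqrt{2}$ of $\sqrt{m_R}$, so $\eta$ is always within a constant factor of $5(5/\eps+1)4\sqrt{m}$, which delivers the desired maximum-degree guarantee, while being at least $5(5/\eps+1)2\sqrt{m}$ and hence valid for invoking \Cref{lem:sparsearboricity}.

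I would first dispense with the easy \emph{amortized} version, which clarifies what the restart has to do. On restart, I walk the list of non-isolated vertices and, for each such $v$, move edges between $LM(v)$ and $LU(v)$ until $|LM(v)|=\min\{\eta,|N(v)|\}$, updating $G'(v)$ as edges enter or leave the sparsifier. This takes $O(m)$ time per restart, and since $\Omega(m)$ edge updates must occur before the next restart is triggered, the cost amortizes to constant per update, with constant amortized recourse.

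To obtain a \emph{worst-case} bound, I would implement the restart \emph{gradually}: perform $c$ units of restart work for a sufficiently large constant $c$ at every subsequent update step, while in parallel processing each incoming edge update against the current sparsifier using the newly-set value of $\eta$ exactly as in the proof of \Cref{lem:sparsearboricity}. Choosing $c$ large enough guarantees the restart finishes well before $m$ can drift by another factor of $2$, so a fully up-to-date sparsifier is always available when the next restart is triggered. Each edge update then pays $O(1)$ for its own processing plus $O(c)=O(1)$ for restart work, giving constant worst-case update time and recourse.

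The main obstacle I expect is arguing that during the gradual restart the maintained $G'$ remains a legal $(1+\eps)$-approximate matching sparsifier with the correct degree bound, because the cutoff $\eta$ currently in use was chosen for the value $m_R$ at the last restart rather than for the current $m$. This is where the extra factor $\sqrt{2}$ built into $\eta=5(5/\eps+1)4\sqrt{m_R}$ (twice the minimum required by \Cref{lem:sparsearboricity}) is used: even after $m$ drifts by a factor of $2$ in either direction, $\eta$ still exceeds the threshold $5(5/\eps+1)2\sqrt{m}$ needed for the sparsification guarantee of \cite{Sol18} while remaining within a constant factor of $5(5/\eps+1)4\sqrt{m}$ for the degree upper bound. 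Combining this invariant with the worst-case construction above yields the stated corollary.
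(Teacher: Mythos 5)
Your proposal is correct and follows essentially the same route as the paper: periodic restarts triggered when $m$ doubles or halves, the amortized $O(m)$-time rebuild first, then a gradual restart performing a constant number of computational steps per update, with the extra $\sqrt{2}$ slack in $\eta=5(5/\eps+1)4\sqrt{m_R}$ absorbing the drift of $m$ so that \Cref{lem:sparsearboricity} still applies throughout. No substantive differences from the paper's argument.
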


\paragraph{Completing the proofs of \Cref{thm1} and \Cref{thm2}.~}
Combining \Cref{cor1} and \Cref{cor2} in conjunction  with \cref{deg_optimized}, we maintain in this way a $((3/2+\eps)\cdot (1+\eps))$-MCM for the original graph $G$, which provides a $(3/2+\eps)$-MCM after rescaling $\eps$ by  a constant factor.
By  \Cref{cor1} and \Cref{cor2}, the worst-case recourse bound of maintaining $G'$ is constant, hence by \cref{deg_optimized} the overall worst-case update time
is $O(\sqrt{\Delta}/\eps^2 + \eps^{-6})$, which is $O(m^{1/4} \eps^{-2.5} + \eps^{-6})$ for general graphs
and  $O(\sqrt{\alpha} \eps^{-2.5} + \eps^{-6})$ for graphs with arboricity bounded by $\alpha$.
\Cref{thm1} and \Cref{thm2} follow.
%



\newpage
\bibliographystyle{alpha}
\bibliography{dynmatch,general}
\end{document}